\documentclass[12pt]{article}
\pdfoutput=1
\usepackage{graphicx}
\usepackage{bbm}
\usepackage{amsmath}
\usepackage{theorem} 
\usepackage{amsfonts}
\usepackage{amssymb}
\usepackage{accents}
\usepackage[onehalfspacing]{setspace}
\usepackage{hyperref}
\usepackage{authblk}

%\usepackage{tikz}
%\usetikzlibrary{trees}
%\usetikzlibrary{intersections}
%\usetikzlibrary{automata}
%\usetikzlibrary{patterns}
\usepackage{fullpage}

\newtheorem{theorem}{Theorem}

\newtheorem{assumption}[theorem]{Assumption}

\newtheorem{condition}[theorem]{Condition}

\newtheorem{corollary}[theorem]{Corollary}

\newtheorem{definition}[theorem]{Definition}

\newtheorem{proposition}[theorem]{Proposition}

\newenvironment{proof}[1][Proof]{\noindent\textbf{#1.} }{\ \rule{0.5em}{0.5em}}
\begin{document}
	
\title{Fragility of Confounded Learning
}

\author[]{Xuanye Wang \thanks{Email: xuanye.wang@dufe.edu.cn.}}

\affil[]{\footnotesize Institute for Advanced Economics Research, Dongbei University of Finance and Economics}

\date{\today}

\maketitle
\begin{abstract}
	We consider an observational learning model with exogenous public payoff shock. We show that confounded learning doesn't arise for almost all private signals and almost all shocks, even if players have sufficiently divergent preferences. 
\end{abstract}

Keywords: Social Learning, Information Aggregation, Confounded Learning. 

JEL Classification: C11, D83 
	
\section{Introduction}
People herd in daily lives. Everyone goes to their friends' favorite restaurants. College students are more likely to register the same classes as their familiar senior students do.  Intuitively, one may think such herding behavior as blind and irrational.

In contrast to intuition, \cite{Banerjee1992} and \cite{BHW1992} point out that herding behavior can be rational. In many circumstances, people's payoff are affected by an unknown underlying state. For example, the payoff of choosing one restaurant depends on the underlying quality of the chef. Decisions made by previous choice-makers reveal valuable private information about the underlying state. Thus, one could rationally herd as long as one thinks the observed previous choices contain enough information.

These two seminal works start a large literature named observational learning. This literature focuses on circumstances where learning could only happen through observing others' actions. In these circumstances, information about the underlying state is distributed among a large group of people. Each person's private information cannot be credibly or precisely transmitted. People guess others' private information behind observed actions, and then combine his/her own private information with these guesses to make decisions. One learns from others' actions, and one's own action reveals one's private information that others can learn. The key question is whether long run efficiency is achieved, whether people eventually assign all the weight to the true underlying state among all alternatives.

The answer depends on several important primitives of modeling: (1) the observing structure-whose actions one person could observe; (2) the signal structure-what kind of private information one person could get; (3) the payoff structure-what payoff one person could receive. Lots of important papers study different variations of these primitives and provide excellent answers.  To name a few, \cite{SS2000} study the classic observing structure that each player observes all previous actions in order. They find that long run effciency critically depend on whether players can get arbitrarily precise signal about the underlying states. If they do, then any wrong herd is eventually overthrown by
some player with sufficiently precise information. Correct learning must happen. \cite{CK2004} study the observing structure  that each player just observe the action of his/her immediate predecessor. They find that people's choices and beliefs cycle. In other words, learning needs not to happen. \cite{MAT2015} place strategic players on a directed social network and let them take actions repeatedly. They assume a player $A$ could only observe the action of player $B$ iff a directed edge connected them. They find that long run efficiency occurs as long as the number of neighborhoods of every player is bounded above and each player can observe his/her observers by a chain with bounded length.

We think one simple but important variation is missed in the literature-the payoff structure may be subject to exogenous shocks in the learning process. For example, the taste of the main course could be different in different seasons for the reason that some key ingredients tastes worse in winter. The benefits of taking the same course may change from year to year as the instructor changes. 

Based on this consideration, we introduce exogenous public payoff shocks into observational learning models. There are certainly many different ways to model payoff shocks. In this article we consider the most simple one. We assume there are two infinite sequence of periods. In one sequence, the payoff of one action is changed by a payoff shock with value $c$. Even such a simple structure of payoff shocks makes a difference: 
it makes confounded learning fragile even if players have sufficiently divergent preferences.

To illustrate the idea, let us briefly review the concepts of learning and confounded learning. The first thing we would like to mention is that the usage of the word ``long run learning" in observational learning literature is different from the usage in daily life and possibly other literatures. A ``long run learning belief" is often just a stationary belief over all potential states and needs not to assign all the weight to the true state. Reaching a long run learning belief doesn't mean efficiency-people need not to assign all the weight to the true state. The well-known ``wrong herd" in \cite{Banerjee1992} is associated with a long run learning belief which assigns sufficient weight to the wrong state. One of the many contributions of 
\cite{SS2000} is that inefficiency at a long run learning belief is not necessarily associated with wrong herd-the learning could be confounded. The key feature of confounded learning is its inconclusiveness-the previous actions don't provide overwhelming evidences about which state is true, and each person still has to use their own private information to decide. A sharp reader may question such a belief cannot be stationary: since each person still use private information to decide, his/her action contains certain amount of information; upon observing such an informative action, the belief must be updated and cannot stay the same! The smart point that \cite{SS2000} point out is that an observable action needs not to be informative even if the action-taker uses private information to decide. Actions are uninformative as long as the probability of the observed action is the same across different states. Mathematically, it means
\begin{eqnarray}
	\label{eqn1}
	\Pr(b|\lambda^*,A)=\Pr(b|\lambda^*,B),
\end{eqnarray}
the probability of observing action $b$ under confounded belief $\lambda^*$ is the same across true state $A$ and alternative state $B$. Of course equation \ref{eqn1} is just necessary but not sufficient for a belief to be confounded. A confounded belief must also be moderate in the sense that this belief doesn't dominate one's private information. \cite{SS2000} shows that confounded learning robustly arise in the sense that equation \ref{eqn1} robustly has moderate solutions.

The mathematics behind confounded learning changes after we introduce the simple public payoff shock. Now for a belief $\lambda^*$ to be confounded, any observable action must be uninformative under belief $\lambda^*$ with and without shock $c$. Mathematically, that means the following equation system
\begin{eqnarray}
	\label{eqn2}
	\Pr(a|\lambda^*,A,0)=\Pr(a|\lambda^*,B,0),\notag\\
	\Pr(a|\lambda^*,A,c)=\Pr(a|\lambda^*,B,c).
\end{eqnarray}
has moderate solutions. 
Despite that each equation in the system robustly has moderate solutions, we prove that the two equations almost never share a moderate solution. In other words, confounded learning almost never arise in models with payoff shock.

The existence of confounded learning undermines long run efficiency. In fact, in a model without payoff shock, its confounded learning beliefs may be uncountable and even homeomorphic to a cantor set. (See \cite{Wang2021}). However, no such belief could stay stationary as long as an arbitrary small shock $c$ arrives regularly. In this sense, we say confounded learning is fragile under payoff shock. 

This article primarily contributes to the observational learning literature by proving that confounded learning is fragile to almost all small payoff shock in a widely considered complete information, divergent preferences setting. In reaching that conclusion, we develop a math framework which may be useful in proving other generic results in observational learning models. Some of our math results, like private signals with real-analytic density functions are dense in certain sense, may have independent usage for other researchers.

Our work is sharply different from a recently a growing literature on observational learning with payoff externalities, which focus on long run efficiency in an environment that players' payoffs depend on not only the underlying state but also actions of other players. See \cite{AK2012}, \cite{EGKR2014}, \cite{Arieli2017}, \cite{Mozon2019} and \cite{SZ2020}. 
Our work is different both from the set up perspective and the result perspective.  The payoff changes in this literature is endogenous. In contrast, the payoff shock in our model is exogenous. What is more important is that this literature does not discuss confounded learning. They aims at study whether long run correct learning arises under different conditions. In contrast, our result is a generic result that confounded learning is fragile under exogenous public payoff shock.

This article is organized as following: in section 2 we set up the model; in section 3 we solve it; in section 4 we describe different types of long run learning in our model; in section 5 we rigorously state the results; in section 6 we prove.

\section{Model with Public Payoff Shock}
\label{section2}
Our model is based on the classic model in \cite{SS2000} where confounded learning arises due to players having divergent preferences. We twist the model by assuming that in each period a payoff shock of value $c$ arrives with a commonly known positive probability. The shock changes the payoff of action $b$ by value $c$. Below is a detailed description.

The model is of discrete-time. In period $0$, nature chooses one realization out of two potential states $A$ and $B$ according to a flat prior. In each period $t\geq 1$, one player arrives and needs to make a choice between actions $a$ and $b$. There are two types of players-``match" type and ``mismatch" type. The match type receives positive payoff iff his/her actions matches the realized underlying state. The mismatch type receives positive payoff if the chosen action is different from nature's chosen state. In each period, a payoff shock arrives with a commonly known probability. The arrival of payoff shock is independent from other primitives of the model, and the specific arrival probability doesn't matter for our results.
Table \ref{noshock} gives the payoff matrices of match type (M) and mismatch type (MM) without payoff shock. Table \ref{shock} gives the payoff matrices with shock. Here $u,v>0$ and $u\neq v$. The payoff tables are commonly known by all players.
\begin{table}[h]
	\centering
	\begin{tabular}{c c c c c c c c c}
		M & $a$ & $b$ & & & &       MM   & $a$ & $b$\\
		$A$ & $u$   & $0$   & & & & $A$ & $0$   & $v$\\
		$B$ & $0$   & $1$   & & & & $B$ & $1$   & $0$\\
	\end{tabular}
	\caption{Payoff without payoff shock}
	\label{noshock}
\end{table}
\begin{table}[h]
	\centering
	\begin{tabular}{c c c c c c c c c}
		M & $a$ & $b$ & & & &       MM   & $a$ & $b$\\
		$A$ & $u$   & $c$   & & & & $A$ & $0$   & $v+c$\\
		$B$ & $0$   & $1+c$   & & & & $B$ & $1$   & $c$\\
	\end{tabular}
	\caption{Payoff without payoff shock}
	\label{shock}
\end{table}

Before deciding, player at period $t$ observes the ordered choices made by his predecessors and whether each predecessor meets a payoff shock. He/she doesn't observe the type of each predecessor. But players commonly know that each player is of match type with probability $p$. Player $t$ observes his/her own type and payoff table before deciding.

Besides, player $t$ also observes a noisy private signal about nature's choice. This noisy signal is a state-contingent random variable $\mathcal{S}$. Its distribution is $F^A(s)$ if nature chooses state $A$, and is $F^B(s)$ if nature chooses state $B$.
Following the literature,  private signal is modeled as a direct signal, that is, $\mathcal{S}=s$ means that `` the probability that nature chooses state $A$ is $s$". We also assume that there is no fully revealing signal. For technical reason, we make the following assumption
\begin{assumption}[Continuous and Uniformly Bounded Assumption]
	\label{assumption1}
	We assume the distributions of private signal $\mathcal{S}$ under both states $(F^A(s),F^B(s))$ have continuous and uniformly bounded density functions $(f^A(s),f^B(s))$. 
\end{assumption}
Besides, all players commonly know the state-contingent distributions of signal $S$ and that $\{S_t\}$ is i.i.d. across periods. 

To summarize, the information set of player $t$ before deciding is 
\begin{eqnarray}
	\{S_t,T_t,h_t\}.
\end{eqnarray}
Here $S_t$ is his/her private signal's realization, $T_t$ is his/her type realization. These are private information known only to player $t$. That $h_t$ is the public history up to period $t$. It is an ordered sequence
\begin{eqnarray}
	h_t=\{(\alpha_{1},\chi_{1}),\dots, (\alpha_{t-1},\chi_{t-1}), \chi_t\}.
\end{eqnarray}
Here $\alpha_k\in \{a,b\}(1\leq k\leq t-1)$ is the action chosen at period $t$, $\chi_k$ denotes whether a payoff shock $c$ occurs at period $k$. Public history $h_t$ is observed not only by player $t$, but also all players arrives later.

We shall refer the above model as a \textit{shock model}. We often need to compare it with the corresponding \textit{no shock model}.
By \textit{corresponding}, we mean the no shock model has the same primitives as the shock model except that payoff is described by table \ref{noshock} in every period. Roughly speaking, our result says all the confounded learning beliefs of the no shock model turns non-stationary in the corresponding shock model for almost all small shocks.

\section{Solution of the Shock Model}
The shock model can be solved through a standard martingale argument. The main conclusion is that long run learning still happens almost surely, as in any no shock model. Readers who are not familiar with the standard martingale argument in solving an observational learning model can read \cite{SS2000} and \cite{Wang2019}. 

Without loss of generality, let us assume that nature chooses state $A$ in period $0$. The public belief $\lambda_t$ at period $t$ is the posterior likelihood ratio of state being $B$ over being $A$ conditional on public history $h_t$:
\begin{eqnarray}
	\lambda_t=\frac{\Pr(B|h_t)}{\Pr(A|h_t)}.
\end{eqnarray}
Player $t$'s posterior belief of state being $B$ over being $A$ is
\begin{eqnarray}
	\frac{\Pr(B|h_t,S_t)}{\Pr(A|h_t,S_t)}=\lambda_t\frac{1-S_t}{S_t}.
\end{eqnarray}
So we can compute the decision rule of player $t$. Player $t$ follows a cutoff strategy with the cutoff described as following:
\begin{itemize}
		\item If shock arrives at period $t$, a match type chooses action $b$ iff
		\begin{eqnarray}
			S_t<\frac{\lambda_t(1+c)}{\lambda_t(1+c)+(u-c)}\equiv m(\lambda_t,c).
		\end{eqnarray} 
		If no shock arrives at period $t$, a match type chooses action $b$ iff
		\begin{eqnarray}
			S_t<\frac{\lambda_t}{\lambda_t+u}= m(\lambda_t,0).
		\end{eqnarray} 
		\item 
		If shock arrives at period $t$, a mismatch type chooses action $b$ iff
		\begin{eqnarray}
			S_t>\frac{\lambda_t(1-c)}{\lambda_t(1-c)+(v+c)}\equiv	mm(\lambda_t,c).
		\end{eqnarray} 
		If no shock arrives at period $t$, a mismatch type chooses action $b$ iff
		\begin{eqnarray}
			S_t>\frac{\lambda_t}{\lambda_t+v}\equiv mm(\lambda_t,0).
		\end{eqnarray} 
\end{itemize}
After player $t$ makes his/her decision, the public belief evolves as following
\begin{eqnarray}
	\lambda_{t+1}=\frac{\Pr(B|h_{t+1})}{\Pr(A|h_{t+1})}=\frac{\Pr(B|h_t)\Pr(\alpha_t|B,h_t)}{\Pr(A|h_t)\Pr(\alpha_t|A,h_t)}=\lambda_t\frac{\Pr(\alpha_t|B,h_t)}{\Pr(\alpha_t|A,h_t)}.
\end{eqnarray}
Then we could verify that $\{\lambda_t\}$ forms a martingale:
\begin{eqnarray}
	E[\lambda_{t+1}|A,h_t]=\sum_{\alpha_t\in \{a,b\}} \bigg(\lambda_t\frac{\Pr(\alpha_t|B,h_t)}{\Pr(\alpha_t|A,h_t)}\bigg)\Pr(\alpha_t|A,h_t)=\lambda_t.
\end{eqnarray}
Therefore, martingale $\{\lambda_t\}$ converges almost surely to a limit random variable $\lambda_\infty$. This means that if we randomly draw an infinite history $h$ out, then 
\begin{eqnarray}
	\label{eqn13}
	\lim_{t\to +\infty}{\lambda_t(h)}=\lambda_\infty(h)
\end{eqnarray}
for almost all $h$. This means that the public belief updated along history $h$ eventually settle down to the value $\lambda_\infty(h)$ for almost all $h$.
Any value that r.v. $\lambda_\infty$ takes with positive probability is referred as a long run learning belief.

Therefore, after introducing a public observable payoff shock, the classic result of observational learning still holds: the public belief eventually settles down almost surely.

\section{Different Types of Long Run Learning}
In this section we classify all the long run learning beliefs into three classes: (1) confounded learning; (2) herding learning; and (3) mixed learning. 

We start with confounded learning. We recall its key property is that both types still use private information to decide at confounded learning beliefs(active). Therefore, at a confounded learning belief $\lambda^*$, the cutoffs of both types' strategy must be within the domain of private signals. That is,
\begin{eqnarray}
	\label{eqn14}
	m(\lambda^*,0),m(\lambda^*,c),mm(\lambda^*,0),mm(\lambda^*,c)\in (\underline{s},\overline{s}).
\end{eqnarray}
Condition \ref{eqn14} can be rewritten as following. Let
\begin{eqnarray}
	\mathcal{E}_0=\{\lambda|m(\lambda,0),m(\lambda,0)\in (\underline{s},\overline{s})\}\,;\, \mathcal{E}_c=\{\lambda|m(\lambda,c),m(\lambda,c)\in (\underline{s},\overline{s})\}.
\end{eqnarray}
Here $\mathcal{E}_0$ is the set of all the public beliefs such that both types are active if no shock arrives. We call $\mathcal{E}_0$ to be the confounding region without shock, and similarly call $\mathcal{E}_c$ to be the confounding region with shock $c$. Then 
\begin{definition}
	A long run learning belief $\lambda^*$ is confounded iff $\lambda^*\in \mathcal{E}_0\cap \mathcal{E}_c$.
\end{definition}
The necessary condition for a belief $\lambda^*$ to be a long run learning belief is that $\lambda^*$ must be stationary with and without shock. 
Thus we could obtain the following necessary condition for a belief $\lambda^*$ to be confounded:
\begin{condition}
	That $\lambda^*$ is confounded learning belief in a model with shock $c$ only if $\lambda^*\in \mathcal{E}_0\cap \mathcal{E}_c$ and solves
	\begin{eqnarray}
		\label{eqn17}
		\Pr(b|B,\lambda^*,0)=\Pr(b|A,\lambda^*,0);
		\Pr(b|B,\lambda^*,c)=\Pr(b|A,\lambda^*,c).
	\end{eqnarray}
\end{condition}

Now we turn to describe herding beliefs. Such a belief must make both types of players ignore their private information and act according to the public belief (inactive) in all periods. Therefore
\begin{definition}
	A long run learning belief $\lambda^*$ is herding iff $\lambda^*\in H_0\cap H_c$. 
\end{definition}
Here 
\begin{eqnarray}
    H_0&=&\{\lambda\in (0,+\infty)|m(\lambda,0),mm(\lambda,0)\notin (\underline{s},\overline{s})\};\notag\\ H_c&=&\{\lambda\in (0,+\infty)|m(\lambda,c),mm(\lambda,c)\notin (\underline{s},\overline{s})\}
\end{eqnarray}
are the herding region without (with shock $c$) respectively.

A sharp reader may have guessed the meaning of a mixed learning. 
\begin{definition}
A long run learning belief $\lambda^*$ is mixed iff $\lambda^*\in \mathcal{E}_0\cap H_c$ (first type) or $\lambda^*\in \mathcal{E}_c\cap H_0$ (second type). 
\end{definition}
These are the beliefs that are confounded without shock and are herding under shock, or vice versa. We pay special attention to the first type mixed learning. In fact, if a confounded learning belief of a no shock model is stationary in the shock model, it must either be confounded or be first type mixed. To see this, we classify all the interior public beliefs under shock $c$ as the herding region and the non-herding region $NH_c=(0,+\infty)-H_c$. For any belief in $NH_c-\mathcal{E}_c$, exactly one type of player is active. Since the actions of this type are not confounded by the other type, they are informative. Hence any belief in $NH_c-\mathcal{E}_c$ cannot be stationary.

\section{Fragility of Confounded Learning}
In this section we state our main results.
\begin{theorem}
	\label{thm6}
	For almost all private signals and almost all payoff shocks, the shock model admits no confounded learning, despite players have sufficiently divergent preferences. 
	
	Furthermore, the shock model admits no first type mixed learning, for all private signals and all small payoff shocks.
	
	To summarize, in almost all cases, none of the confounded learning beliefs in a no shock model stays stationary in the corresponding shock model.
\end{theorem}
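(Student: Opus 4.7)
The plan is to treat the two claims separately, since the first-type mixed assertion is a pure continuity argument while the no-confounded-learning assertion requires a transversality/genericity argument exploiting that two independent scalar equations in a single unknown generically have no common solution.

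\textbf{No first-type mixed learning for all small shocks.} Suppose $\lambda^*\in\mathcal{E}_0\cap H_c$ were a long-run learning belief. Because both types are active at $\lambda^*$ in the no-shock regime, stationarity forces $\Pr(b|B,\lambda^*,0)=\Pr(b|A,\lambda^*,0)$, so $\lambda^*$ is a confounded belief of the no-shock model; in particular $m(\lambda^*,0)$ and $mm(\lambda^*,0)$ lie strictly inside $(\underline{s},\overline{s})$. Since the closed-form cutoff formulas are jointly continuous in $(\lambda,c)$, the shifted cutoffs $m(\lambda^*,c),\,mm(\lambda^*,c)$ also lie in $(\underline{s},\overline{s})$ for $|c|$ sufficiently small, placing $\lambda^*\in\mathcal{E}_c$ and contradicting $\lambda^*\in H_c$. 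To make the $\epsilon$ uniform over all such $\lambda^*$, I would note that the set of confounded beliefs of the no-shock model stays bounded away from $\partial\mathcal{E}_0$: on that boundary one of the differences $F^B(m(\lambda,0))-F^A(m(\lambda,0))$ or $F^B(mm(\lambda,0))-F^A(mm(\lambda,0))$ vanishes, and the confounding equation then reduces to a degenerate condition that cannot hold robustly on the boundary.

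\textbf{No confounded learning for generic signals and shocks.} Writing
\[G(\lambda,c)=p\,\bigl[F^B(m(\lambda,c))-F^A(m(\lambda,c))\bigr]-(1-p)\,\bigl[F^B(mm(\lambda,c))-F^A(mm(\lambda,c))\bigr],\]
a confounded belief of the shock model is a common zero of $G(\lambda,0)$ and $G(\lambda,c)$ inside $\mathcal{E}_0\cap\mathcal{E}_c$. I would first reduce to signals whose densities $f^A,f^B$ are real-analytic, a class the paper's auxiliary lemma shows to be dense (in an appropriate full-measure sense) in the admissible signal space. For such signals $G$ is jointly real-analytic on $\mathcal{E}_0\cap\mathcal{E}_c$, so the no-shock zero set $\Lambda^*_0=\{\lambda\in\mathcal{E}_0:G(\lambda,0)=0\}$ is either all of $\mathcal{E}_0$ (excluded by a further generic condition on the signal) or a discrete, hence countable, subset of $\mathcal{E}_0$.

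\textbf{Transversality step and conclusion.} For each $\lambda^*\in\Lambda^*_0$ the scalar map $c\mapsto G(\lambda^*,c)$ is real-analytic with $G(\lambda^*,0)=0$. Computing $\partial_c G(\lambda^*,0)$ via the derivatives $\partial_c m(\lambda^*,0),\,\partial_c mm(\lambda^*,0)$ (explicit nonzero rational functions of $\lambda^*,u,v$) together with the values of $f^A,f^B$ at the cutoffs, I would show that $\partial_c G(\lambda^*,0)\neq 0$ at every $\lambda^*\in\Lambda^*_0$ outside a null set of signals. When this derivative is nonzero, real-analyticity of $G(\lambda^*,\cdot)$ forces $c=0$ to be an isolated zero, so the set of shocks $c$ for which $G(\lambda^*,c)=0$ is discrete. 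Taking the countable union over $\lambda^*\in\Lambda^*_0$ yields a Lebesgue-null set of bad shocks, and invoking the density lemma upgrades the conclusion from "dense class of analytic signals" to "almost all signals." The hard part will be the genericity step: formulating the density lemma for real-analytic densities in a topology/measure under which it translates into "almost all signals," and then certifying that $\partial_c G(\lambda^*,0)\neq 0$ at every point of $\Lambda^*_0$ outside a null set of signal space. This is an infinite-dimensional transversality claim, and the paper's real-analytic density lemma is the technical linchpin that makes it tractable.
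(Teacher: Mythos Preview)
Your outline has the right two-part structure and correctly identifies real-analytic densities as the key technical device, but it diverges from the paper in two places, one of which is a genuine gap.

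\textbf{The upgrade from analytic signals to ``almost all'' signals.} In the paper, ``almost all signals'' is Baire-categorical, not measure-theoretic: one defines $F_n=\{f: m(S_f)\ge 1/n\}$ and shows each $F_n$ is nowhere dense. This requires three ingredients: (i) a closure characterization (Proposition~\ref{prop7}) showing that any $f\in\overline{F}_n$ still satisfies $G_f(\lambda,0)=G_f(\lambda,c)=0$ on $\overline{\mathcal E}_0\cap\overline{\mathcal E}_c$ for a set of $c$ of measure $\ge 1/n$; (ii) $L_\infty$-density of real-analytic signals (Proposition~\ref{prop8}); and (iii) the fact that \emph{no} real-analytic signal lies in $\overline{F}_n$ (Proposition~\ref{prop9}). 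Your sentence ``invoking the density lemma upgrades the conclusion'' skips step (i) entirely; density of analytic signals alone does not transfer a null-bad-shock conclusion to the complement of a first-category set. The closure/compactness argument is the missing idea.

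\textbf{The transversality step is unnecessary.} You propose to show $\partial_c G(\lambda^*,0)\neq 0$ at every $\lambda^*\in\Lambda_0^*$ ``outside a null set of signals,'' and you flag this as the hard part. The paper avoids it: for a real-analytic $f$, $G_f(\lambda,0)$ has only \emph{finitely} many zeros on the compact $\overline{\mathcal E}_0$. If such an $f$ were in $\overline{F}_n$, pigeonhole over this finite set yields a single $\lambda_0$ with $G_f(\lambda_0,c)=0$ on a positive-measure set of $c$, and analyticity in $c$ gives the contradiction directly. No derivative condition at $c=0$ is invoked, and no further genericity in the signal is needed---every real-analytic signal is excluded from $\overline{F}_n$.

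\textbf{First-type mixed learning.} Your continuity argument is correct in spirit, but the paper's is shorter and does not need the claim that confounded beliefs stay bounded away from $\partial\mathcal E_0$. It simply notes that $\mathcal E_0$ (the intersection of the two types' active regions) is strictly contained in $NH_0$ (their union), and a small shock only perturbs $NH_c$ slightly from $NH_0$; hence $\mathcal E_0\subset NH_c$, i.e., $\mathcal E_0\cap H_c=\emptyset$ for all small $c$, with no reference to which points of $\mathcal E_0$ happen to be confounded.
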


We spend the rest of this section to set-up the necessary math so that we can rigorously translate the above theorem. 
In particular, we wan to explain the meaning of \textit{almost all private signals} and \textit{almost all payoff shocks}.

Let us recall that an observational learning model with payoff shock is described by the following parameters: (1) $p$-the proportion of match type players; (2) $u,v$-the payoff parameters of both types; (3) $(\underline{s},\overline{s})$-the strength of private signal; (4) $(f^A(s),f^B(s))$-the private signal. We first nail down those parameters that makes our main theorem meaningful. Obviously, if the confounding region without shock $\mathcal{E}_0$ is empty, then naturally there is no confounded learning or mixed learning of first type. 
%Roughly speaking, we will embed the set of private signals into a topology space. Then a set of private signals which is of first category is considered as small. When we say a statement holds for almost all private signals, we mean it holds for all signals except a first category set. We say a statement holds for almost all payoff shocks means that, except for a $\varepsilon$-large set, it holds for all shocks $c$ that makes $\mathcal{E}_0$ non-empty. Now we explain the details.
So we only consider parameter tuples $(p,u,v,\underline{s},\overline{s})$ with $\mathcal{E}_0\neq \emptyset$.

After arbitrarily pick a meaningful parameter tuple, we consider all private signals with support $(\underline{s},\overline{s})$ and satisfying assumption \ref{assumption1}. We claim that this set of private signals can be identified with the following subset of $L_\infty(\underline{s},\overline{s})\cap C^0(\underline{s},\overline{s})$
\begin{eqnarray}
	\label{eqn18}
	F^\infty_{(\underline{s},\overline{s})}=\bigg\{f(s)\bigg| 
	\int_{\underline{s}}^{\overline{s}} f(s)\frac{1-2s}{s}ds=0; 
	\int_{\underline{s}}^{\overline{s}} f(s)ds=1; f(s)\geq 0;
	f(s)\frac{1-s}{s}\in L_\infty(\underline{s},\overline{s})\bigg\}.
\end{eqnarray}
To see this, we recall that private signal is modeled as direct signal-a realization of value $s$ means that ``probability of state being $A$ is $s$". Then we must have
\begin{eqnarray}
	\label{eqn19}
	\frac{f^B(s)}{f^A(s)}=\frac{1-s}{s}
\end{eqnarray}
on the common support of $(f^A(s),f^B(s))$. Equation \ref{eqn19} is necessary because it impose the requirement that the likelihood of direct signals $\frac{1-s}{s}$ must agree with the likelihood of density functions $\frac{f^B(s)}{f^A(s)}$. Obviously, a pair of density functions $(f^A(s),f^B(s))$ satisfying equation \ref{eqn19} gives out a description of direct signals. Using this relation we could identify a pair of density functions $(f^A(s),f^B(s))$ with $f^A(s)$. The $F_{(\underline{s},\overline{s})}^\infty$ is the set of density functions $f(s)$ that could serve as $f^A(s)$ in a pair. In other words, any $f(s)\in F_{(\underline{s},\overline{s})}^\infty$ is a density function satisfying that $f(s)\frac{1-s}{s}$ is also a continuous uniformly bounded density function. This signal set $F_{(\underline{s},\overline{s})}^\infty$ is embedded into $L_\infty(\underline{s},\overline{s})$ and is endowed with the subspace topology induced by $L_\infty$-norm. Now we can nail down the meaning of \textit{almost all signals}:
\begin{definition}
Holding a meaning parameter tuple $(p,u,v,\underline{s},\overline{s})$, we say a statement $P$ holds for almost all signals if it holds for all the signals $f(s)\in F_{(\underline{s},\overline{s})}^\infty$ except for a first category set. Here the first category is defined with respect to the topology induced by $L_\infty$-norm.
\end{definition}

Now we turn to nail down the meaning of ``almost all shocks". Given a parameter tuple $(p,u,v,\underline{s},\overline{s})$ with non-empty $\mathcal{E}_0$, it is trivial to say the model with shock $c$ admits no confounded learning if the confounding region under shock $c$ becomes empty. Therefore, we only consider shock in the set 
\begin{eqnarray}
	\label{eqn3}
	C=\{c\in (-\infty, +\infty)|\mathcal{E}_c\neq \emptyset\}-\{0\}.
\end{eqnarray}
We observe that this shock set $C$ must be bounded. Actually, $C$ must be a subset of $(-1,u)\cap (-v,1)$, which is the set of all shocks that guarantees both types are active. On the other hand, we also observe that $C$ must be non-empty. This is because we have guaranteed that $\mathcal{E}_0$ is non-empty by choosing the meaningful parameter tuple $(p,u,v,\underline{s},\overline{s})$. A small enough shock just slightly shift the confounding region and cannot eliminate it. So $C$ always contain an interval around $0$.
We also need a technical assumption:
\begin{assumption}
\label{assumption8}
	We arbitrarily pick a small enough $\varepsilon_0>0$ and consider 
	\begin{eqnarray}
		C_{\varepsilon_0}=C\cap (-1+\varepsilon_0,u-\varepsilon_0)\cap(-v+\varepsilon_0,1-\varepsilon_0)
	\end{eqnarray}
	to be the set of meaningful shocks.
\end{assumption}
The purpose of this assumption is to keep cutoffs $m(\lambda,c),mm(\lambda,c)$ bounded away from $0$ and $+\infty$ for all $c\in C_{\varepsilon_0}$. Given that we can take $\varepsilon_0$ arbitrarily small, we consider this as a mild assumption. 
\begin{definition}
	We say a statement holds for almost all shocks if it holds a.e. on $C_{\varepsilon_0}$.
\end{definition}

Now we can translate the first part of our main theorem into rigorous mathematical language:
\begin{theorem}
	\label{thm10}
	Given any parameter tuple $(p,u,v,\underline{s},\overline{s})$ with non-empty $\mathcal{E}_0$, for almost all private signals, the following equation system 
	\begin{eqnarray}
		\label{eqn22}
		\Pr(b|B,\lambda^*,0)=\Pr(b|B,\lambda^*,0),
		\Pr(b|B,\lambda^*,c)=\Pr(b|B,\lambda^*,c);
	\end{eqnarray}
has no solution on $\mathcal{E}_0\cap \mathcal{E}_c$ for almost all $c$.
\end{theorem}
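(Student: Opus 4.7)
The plan is to exploit real-analytic structure in both the belief $\lambda$ and the shock $c$, after first reducing to a dense subclass of signals. Write
\begin{eqnarray*}
g_0(\lambda,f) &=& \Pr(b|B,\lambda,0)-\Pr(b|A,\lambda,0),\\
g_c(\lambda,f) &=& \Pr(b|B,\lambda,c)-\Pr(b|A,\lambda,c).
\end{eqnarray*}
A confounded belief $\lambda^*$ under shock $c$ must lie in $Z_0(f)\cap Z_c(f)\cap \mathcal{E}_0\cap \mathcal{E}_c$, where $Z_0(f)=\{\lambda\in\mathcal{E}_0:g_0(\lambda,f)=0\}$ and analogously $Z_c(f)$. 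I would show that for all but a first-category set of $f$, $Z_0(f)$ is at most countable, and for each such $\lambda^*\in Z_0(f)$ the map $c\mapsto g_c(\lambda^*,f)$ is a non-trivial real-analytic function of $c$, whose zero set in $C_{\varepsilon_0}$ is therefore Lebesgue null.

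First I would invoke the density result advertised in the introduction—that signals with real-analytic densities form a dense subset of $F^\infty_{(\underline{s},\overline{s})}$ in the $L_\infty$-topology—and argue that it suffices to verify the claim on a suitable residual subset containing these. For real-analytic $f$, the CDFs $F^A, F^B$ are real-analytic on $(\underline{s},\overline{s})$ and the cutoffs $m(\lambda,0), mm(\lambda,0)$ are rational in $\lambda$, so $\lambda\mapsto g_0(\lambda,f)$ is real-analytic on $\mathcal{E}_0$. Provided $g_0(\cdot,f)\not\equiv 0$ on each component of $\mathcal{E}_0$, its zero set $Z_0(f)$ is discrete and hence countable. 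Separately I would show that the set of $f$ for which $g_0(\cdot,f)$ vanishes identically on some open sub-interval of $\mathcal{E}_0$ is meagre: such identical vanishing imposes infinitely many independent linear constraints on $f$, each cuttable by an explicit $L_\infty$-small perturbation respecting the affine and non-negativity constraints in \eqref{eqn18}.

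Next, for each $\lambda^*\in Z_0(f)$ set $G_{\lambda^*}(c)=g_c(\lambda^*,f)$. Because $m(\lambda^*,c)$ and $mm(\lambda^*,c)$ are rational in $c$, $G_{\lambda^*}$ is real-analytic on $C_{\varepsilon_0}$ and $G_{\lambda^*}(0)=0$ by definition of $Z_0(f)$. A direct computation using $f^B(s)-f^A(s)=f(s)(1-2s)/s$ together with $\partial_c m(\lambda^*,c)|_{c=0}=\lambda^*(u+1)/(\lambda^*+u)^2$ and $\partial_c mm(\lambda^*,c)|_{c=0}=-\lambda^*(v+1)/(\lambda^*+v)^2$ shows that $G'_{\lambda^*}(0)$ is a linear combination of $f(m(\lambda^*,0))$ and $f(mm(\lambda^*,0))$ with non-vanishing coefficients. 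After discarding a further meagre set of $f$ on which this derivative vanishes at some element of $Z_0(f)$, we obtain $G_{\lambda^*}\not\equiv 0$, hence $\{c\in C_{\varepsilon_0}:G_{\lambda^*}(c)=0\}$ has Lebesgue measure zero. Taking the countable union over $\lambda^*\in Z_0(f)$ preserves the null property, yielding the theorem.

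The main obstacle will be the two meagreness claims: that generically in $f\in F^\infty_{(\underline{s},\overline{s})}$, (i) $g_0(\cdot,f)$ does not vanish identically on any component of $\mathcal{E}_0$, and (ii) $G'_{\lambda^*}(0)\neq 0$ at every point of the (possibly infinite) discrete set $Z_0(f)$. These are infinite-dimensional genericity statements constrained by the affine equalities and non-negativity defining $F^\infty_{(\underline{s},\overline{s})}$, so they require building admissible perturbations of $f$ that break the offending identities without violating the constraints; this is precisely where the real-analytic density lemma and the math framework mentioned in the introduction do the heavy lifting.
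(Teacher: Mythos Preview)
Your core idea---exploit analyticity in both $\lambda$ and $c$, reduce to real-analytic densities, and use that analytic functions have isolated zeros---matches the paper's. But the Baire-category scaffolding you set up is not the paper's, and as written it has a gap.

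You propose to ``verify the claim on a suitable residual subset containing'' the analytic signals and then conclude. This is the wrong direction: analytic signals are \emph{dense} in $F^\infty_{(\underline{s},\overline{s})}$ but almost certainly not residual (they are highly nongeneric objects), so proving the theorem for analytic $f$---or even for a residual subset of analytic $f$---says nothing about residually many $f$ in $F^\infty_{(\underline{s},\overline{s})}$. The paper instead defines $F_n=\{f: m(S_f)\ge 1/n\}$, proves a continuity/closure lemma (Proposition~\ref{prop7}) showing that any $f\in\overline{F}_n$ still satisfies the confounding system on $\overline{\mathcal{E}}_0\cap\overline{\mathcal{E}}_c$ for a set of $c$ of measure $\ge 1/n$, and then shows no analytic signal can lie in $\overline{F}_n$. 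Since analytic signals are dense, $\overline{F}_n$ has empty interior, hence $F_n$ is nowhere dense and $\bigcup_n F_n$ is meagre. Your outline is missing this closure step entirely; without it, density of analytic signals does no work.

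Your condition (ii), that $G'_{\lambda^*}(0)\neq 0$ at \emph{every} $\lambda^*\in Z_0(f)$, is both stronger than needed and hard to make generic, because $Z_0(f)$ moves with $f$. The paper never needs a first-derivative condition: once $f$ is analytic and $\mathcal{E}_0$ is bounded, $Z_0(f)$ is \emph{finite}, and if some $\lambda_0\in Z_0(f)$ had $G_f(\lambda_0,c)=0$ on a positive-measure set of $c$, analyticity in $c$ would force $G_f(\lambda_0,\cdot)\equiv 0$ on $\mathring{M}_{\lambda_0}$---and it is this identical vanishing, not a derivative condition, that is ruled out. Finally, your proposal is silent on the partially bounded and unbounded cases ($\underline{s}=0$ or $\overline{s}=1$), where $\mathcal{E}_0$ is unbounded and $Z_0(f)$ need not be finite; the paper handles this with a separate monotonicity argument (Propositions~\ref{prop10}--\ref{prop11}) showing that confounded beliefs near $0$ or $+\infty$ force $f$ to vanish on an interval, which is itself a first-category condition.
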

The proof of this theorem is complicated and is delayed to the next section. 

The second part of theorem \ref{thm6} can be easily proved. If private signal is unbounded, then the herding region under shock $c$ is empty and the result is immediate. Otherwise, a small shock $c$ just slightly shift the non-herding region from $NH_0$ to $NH_c$. Because $NH_0$ strictly covers $\mathcal{E}_0$
\footnote{The non-herding region is the union of active regions of both type while the confounding region is the intersection of these active regions.}, a slight shift doesn't change it. So $\mathcal{E}_0\cap H_c$ must be empty for small $c$. In fact, given all other primitives, we call a shock $c$ \textit{small} iff it satisfies that $\mathcal{E}_0\cap H_c=\emptyset$. 

We close this section with the following corollary:
\begin{corollary}
	If private signal is of unbounded strength, in the long run all the weight is assigned to the true state for almost all signals and almost all shocks.
\end{corollary}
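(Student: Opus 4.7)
The plan is to combine Theorem \ref{thm10} and the second part of Theorem \ref{thm6} with two simple observations: (i) unbounded signal strength rules out herding regions entirely, and (ii) the martingale $\{\lambda_t\}$ cannot converge to $+\infty$ with positive probability when $A$ is the true state.

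First I would observe that if private signal is of unbounded strength, then $(\underline{s},\overline{s})=(0,1)$, and for any interior public belief $\lambda$, the cutoffs $m(\lambda,c)$ and $mm(\lambda,c)$ lie strictly inside $(0,1)$ for every admissible $c$. Hence the herding regions $H_0$ and $H_c$ are empty for all relevant $c$. This immediately eliminates herding learning as well as mixed learning of both types, so the only candidate interior long-run learning beliefs are confounded ones. By Theorem \ref{thm10}, for almost all signals in $F^\infty_{(\underline{s},\overline{s})}$ and almost all shocks $c\in C_{\varepsilon_0}$, the confounded system on $\mathcal{E}_0\cap \mathcal{E}_c$ has no solution; together with the second part of Theorem \ref{thm6}, no long-run learning belief can sit in the interior of $(0,+\infty)$.

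Next, because $\{\lambda_t\}$ is a nonnegative martingale (under the true state $A$) and converges almost surely to $\lambda_\infty$ by the argument of Section 3, the only possible values of $\lambda_\infty$ are the boundary points $0$ and $+\infty$. To rule out $+\infty$, I would invoke Fatou's lemma for the martingale under $A$: since $E[\lambda_t\,|\,A]=\lambda_0=1$ for all $t$, one has $E[\lambda_\infty\,|\,A]\leq 1<+\infty$, which forces $\Pr(\lambda_\infty=+\infty\,|\,A)=0$. Therefore $\lambda_\infty=0$ almost surely conditional on state $A$, and by symmetry (or the corresponding argument under state $B$), all the weight is assigned to the true state in the long run.

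The only step requiring care is the verification that unbounded support really does make $H_0$ and $H_c$ empty for all admissible parameter tuples and shocks $c\in C_{\varepsilon_0}$; this amounts to checking that Assumption \ref{assumption8} keeps the cutoffs bounded away from $0$ and $+\infty$ so they remain strictly interior to $(0,1)$. Once that is established, the rest is a direct chaining of the two main theorems with the standard martingale convergence argument from Section 3, so I do not expect any substantive obstacle beyond bookkeeping.
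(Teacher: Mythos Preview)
Your proposal is correct and follows essentially the same route as the paper: the paper observes that with unbounded signals $\mathcal{E}_c=(0,+\infty)$ for every admissible $c$ (equivalently, $H_c=\emptyset$, which is your formulation), applies Theorem \ref{thm10} to eliminate interior stationary points, and then uses Fatou's lemma on the martingale to rule out $\lambda_\infty=+\infty$. Your invocation of the second part of Theorem \ref{thm6} is redundant once you have $H_c=\emptyset$, and the appeal to Assumption \ref{assumption8} in your last paragraph is not actually needed here---for each fixed $c\in C\subset(-1,u)\cap(-v,1)$ and $\lambda\in(0,+\infty)$ the cutoffs are automatically strictly inside $(0,1)$---but neither point affects correctness.
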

To see it, we observe that $\mathcal{E}_c$ is actually $(0,+\infty)$ for all $c$ in $C_{\varepsilon_0}\cup \{0\}$. Then theorem \ref{thm10} implies that the only stationary beliefs are $\lambda^*=0$ and $\lambda^*=+\infty$. Because the standard martingale argument rules out the possibility that all the weight is assigned to the incorrect state
\footnote{
Following Fatou's lemma, $\lim_{t\to +\infty} E[\lambda_t]\geq E[\lim_{t\to\infty} \lambda_t]$. If $\lambda^*$ could be $+\infty$ with positive probability, then the right-hand side is $+\infty$. This is impossible since the left-hand side is $\lambda_0$, which is finite.
}
, the result follows.

\section{Proof of Main Result}
In this section we prove our theorem \ref{thm10}. Before proof, we first rewrite the equation system \ref{eqn22} so that the notation reflect the underlying private signal $f(s)$: first, we have
\begin{eqnarray}
	&&\Pr(b|B,\lambda^*,c)-\Pr(b|A,\lambda^*,c)\notag\\
	&=&p[F^B(m(\lambda^*,c))-F^A(m(\lambda^*,c))]+(1-p)[F^B(mm(\lambda^*,c))-F^A(mm(\lambda^*,c))].\notag
\end{eqnarray}
Then use that private signal pair $(f^A(s),f^B(s))$ is given by $(f(s),f(s)\frac{1-s}{s})$ and rewrite probability as integral of density function, we have 
\begin{eqnarray}
	\label{eqn25}
	&&p[F^B(m(\lambda^*,c))-F^A(m(\lambda^*,c))]+(1-p)[F^B(mm(\lambda^*,c))-F^A(mm(\lambda^*,c))]\notag\\
	&=& p \int_{\underline{s}}^{m(\lambda^*,c)} f(s)\frac{1-2s}{s} ds+(1-p) \int_{mm(\lambda^*,c)}^{\overline{s}} f(s)\frac{1-2s}{s}ds
\end{eqnarray}
provided that $m(\lambda^*,c),mm(\lambda^*,c)\in (\underline{s},\overline{s})$. We shall denote the last line of equation \ref{eqn25} as $G_f(\lambda^*,c)$. Then the equations system \ref{eqn22} can be rewritten as 
\begin{eqnarray}
	G_f(\lambda^*,0)=G_f(\lambda^*,c)=0.%, \lambda^*\in \mathcal{E}_0\cap \mathcal{E}_c.
\end{eqnarray}

\subsection{Strongly Bounded Signals}
We start with the case where $0<\underline{s}<\overline{s}<1$. We name all these signals as ``strongly bounded".
For each private signal $f(s)$, let 
\begin{eqnarray}
	S_f=\{c\in C_{\varepsilon_0}|G_f(\lambda^*,0)=G_f(\lambda^*,c)=0 \mbox{ for some }\lambda^*\in \mathcal{E}_0\cap \mathcal{E}_c\}.
\end{eqnarray}
In other words, $S_f$ is the set of shocks such that confounded learning belief $\lambda^*$ arises in a shock model. We call $S_f$ as the stationary set for signal $f(s)$.

We make the following definition
\begin{definition}
	\label{def6}
	$F_n\subset F_{(\underline{s},\overline{s})}^\infty$ is the set of private signals whose stationary set $S_f$ has measure no less than $\frac{1}{n}$.
\end{definition}
Our purpose is to prove that $F_n$ is nowhere dense in $F_{(\underline{s},\overline{s})}^\infty$. Then, except a first category set $\bigcup_{n} F_n$, the stationary set of each private signal $f(s)$ must have measure zero. This completes the proof of theorem \ref{thm10} for strongly bounded signals.

To prove the nowhere denseness of $F_n$, we turn to show that $F_n$'s closure in $F_{(\underline{s},\overline{s})}^\infty$, which we denoted as $\overline{F}_n$ has no interior point. To do so, we must first characterize the closure $\overline{F}_n$. The following proposition says that $\overline{F}_n$ is almost $F_n$.

\begin{proposition}
	\label{prop7}
	If $f(s)\in \overline{F}_n$, then there exists $S\subset C_{\varepsilon_0}$ with $m(S)\geq \frac{1}{n}$ such that $\forall c\in S$, equation system
	\begin{eqnarray}				
		G_f(\lambda^*,0)=G_f(\lambda^*,c)=0
	\end{eqnarray}
has solutions on $\overline{\mathcal{E}}_0\cap \overline{\mathcal{E}}_c$.
\end{proposition}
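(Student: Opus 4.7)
The plan is to pass to the limit in the defining property of $F_n$. Given $f\in\overline{F}_n$, choose a sequence $\{f_k\}\subset F_n$ with $\|f_k-f\|_\infty\to 0$. For each $k$ the stationary set $S_{f_k}\subset C_{\varepsilon_0}$ has measure at least $1/n$, and for every $c\in S_{f_k}$ some $\lambda_k^*(c)\in\mathcal{E}_0\cap\mathcal{E}_c$ satisfies $G_{f_k}(\lambda_k^*(c),0)=G_{f_k}(\lambda_k^*(c),c)=0$. The goal is to extract from this data a limit solution associated to $f$.

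First I would set $S:=\limsup_{k\to\infty}S_{f_k}=\bigcap_{N\ge 1}\bigcup_{k\ge N}S_{f_k}$. Since $C_{\varepsilon_0}\subset(-1,u)\cap(-v,1)$ has finite Lebesgue measure, the reverse Fatou lemma for sets gives $m(S)\ge\limsup_{k}m(S_{f_k})\ge 1/n$. Measurability of each $S_{f_k}$ is obtained by viewing it as the projection onto the $c$-coordinate of the Borel set $\{(\lambda,c):\lambda\in\mathcal{E}_0\cap\mathcal{E}_c,\;G_{f_k}(\lambda,0)=G_{f_k}(\lambda,c)=0\}$, which is analytic and hence Lebesgue-measurable.

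Second, for any fixed $c\in S$, extract a subsequence $\{k_j\}$ with $c\in S_{f_{k_j}}$ for all $j$. The cutoffs $m(\lambda,c), mm(\lambda,c)$ lie in $(\underline{s},\overline{s})$ only when $\lambda$ ranges over a bounded interval depending on $u,v,c,\underline{s},\overline{s}$, so $\mathcal{E}_0\cap\mathcal{E}_c$ is bounded and the sequence $\lambda_{k_j}^*(c)$ admits a further convergent subsequence with limit $\lambda^*(c)\in\overline{\mathcal{E}_0\cap\mathcal{E}_c}\subset\overline{\mathcal{E}}_0\cap\overline{\mathcal{E}}_c$. Then I would invoke joint continuity of $G_g(\lambda,d)$ in $(g,\lambda,d)$ under the $L_\infty$ topology on $g$: in the strongly bounded case $0<\underline{s}<\overline{s}<1$ the weight $\frac{1-2s}{s}$ is bounded on $(\underline{s},\overline{s})$, the cutoffs $m(\lambda,d),mm(\lambda,d)$ are continuous in $(\lambda,d)$, and the difference $G_{f_{k_j}}(\lambda_{k_j}^*(c),d)-G_f(\lambda^*(c),d)$ can be split into one piece controlled by $\|f_{k_j}-f\|_\infty$ times a uniform bound and another piece controlled by the change in integration endpoints with $f$ fixed. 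Passing to the limit in both equations yields $G_f(\lambda^*(c),0)=G_f(\lambda^*(c),c)=0$, which is the desired conclusion at $c$.

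The main obstacle I anticipate is the measurability-and-reverse-Fatou bookkeeping at the level of the sets $S_{f_k}$; once $m(S)\ge 1/n$ is secured, the compactness of $\mathcal{E}_0\cap\mathcal{E}_c$ together with the uniform boundedness of $\frac{1-2s}{s}$ on the strongly bounded support makes the limiting step essentially routine. A secondary concern is that $\lambda^*(c)$ might drift to a degenerate boundary point where the integrals in $G_f$ become trivially zero, but membership in $\overline{\mathcal{E}_0\cap\mathcal{E}_c}$ forces the cutoffs to stay inside $[\underline{s},\overline{s}]$, so the limiting equations retain genuine content.
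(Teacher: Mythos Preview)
Your proposal is correct and follows essentially the same approach as the paper: both define the target set as $\limsup_k S_{f_k}$, invoke (reverse) Fatou on the finite-measure set $C_{\varepsilon_0}$ to obtain $m(S)\ge 1/n$, then for each $c\in S$ extract a convergent subsequence of confounded beliefs and pass to the limit using an $L_\infty$-continuity argument for $G$ split into a density-difference term and an endpoint-shift term. You are in fact slightly more careful than the paper about the measurability of $S_{f_k}$; the paper simply asserts $m(S_{i.o.})\ge 1/n$ without further comment.
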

Comparing definition \ref{def6} and proposition \ref{prop7}, we see the only difference is that: if $f(s)\in \overline{F}_n$, then the equation system could have solution on the closure of $\mathcal{E}_0\cap \mathcal{E}_c$. \\
\begin{proof}
	For any $f(s)\in \overline{F}_n$, there exists $\{f_n(s)\}\in F_n$ such that $\|f_n-f\|_{L_\infty}\to 0$. From definition \ref{def6}, each $f_n(s)$ has an associated stationary set $S_n$ with $m(S_n)\geq \frac{1}{n}$. We let 
	\begin{eqnarray}
		S_{i.o.}=\bigcap_{k}\bigcup_{k\geq n} S_n
	\end{eqnarray}
to be shocks that shows up infinite times in $\{S_n\}$. We could prove that $m(S_{i.o.})\geq \frac{1}{n}$. 

Arbitrarily choose a $c_0\in S_{i.o.}$, there exists subsequence $n_k$ such that $c_0\in S_{n_k}$. Since $c_0$ is a stationary shock for $f_{n_k}(s)$, for each $n_k$, there exists $\lambda^{c_0}_{n_k}$ being confounded belief for $f_{n_k}(s)$. That is, 
\begin{eqnarray}
	G_{f_{n_k}}(\lambda^{c_0}_{n_k},c_0)=G_{f_{n_k}}(\lambda^{c_0}_{n_k},0)=0 \mbox{ and } \lambda_{n_k}^{c_0}\in \mathcal{E}_0\cap \mathcal{E}_{c_0}
\end{eqnarray}
Then $\{\lambda_{n_k}^{c_0}\}$ has at least one cluster point $\Lambda_{c_0}$ on $\overline{\mathcal{E}}_0\cap \overline{\mathcal{E}}_{c_0}$. Let $n_{k_q}$ be a sub-subsequence of $n_k$ such that $\lambda_{n_{k_q}}^{c_0}$ converges to $\Lambda_{c_0}$. We claim that
\begin{eqnarray}
	\label{eqn29}
	\lim_{n_{k_q}\to +\infty} |G_{f_{n_{k_q}}}(\lambda^{c_0}_{n_{k_q}},c_0)-G_{f}(\Lambda_{c_0},c_0)|=\lim_{n_{k_q}\to +\infty} |G_{f_{n_{k_q}}}(\lambda^{c_0}_{n_{k_q}},0)-G_{f}(\Lambda_{c_0},0)|=0.
\end{eqnarray}
This claim directly implies that 
\begin{eqnarray}
	G_f(\Lambda_{c_0},c_0)=G_f(\Lambda_{c_0},0)=0 \mbox{ with } \Lambda_{c_0}\in \overline{\mathcal{E}}_0\cap \overline{\mathcal{E}}_{c_0},
\end{eqnarray}
which completes the proof of proposition \ref{prop7}.

To prove claim \ref{eqn29}, we do the following transformations. The first transformation is
\begin{eqnarray}
	\label{eqn31}
	&&|G_{f_{n_{k_q}}}(\lambda^{c_0}_{n_{k_q}},\mathfrak{c})-G_{f}(\Lambda_{c_0},\mathfrak{c})|\notag\\
	&=&|G_{f_{n_{k_q}}}(\lambda^{c_0}_{n_{k_q}},\mathfrak{c})-G_f(\lambda^{c_0}_{n_{k_q}},\mathfrak{c})+G_{f}(\lambda^{c_0}_{n_{k_q}},\mathfrak{c})-G_{f}(\Lambda_{c_0},\mathfrak{c})|\notag\\
	&=& |-p\int^{\overline{s}}_{m(\lambda_{n_{k_q}}^{c_0},\mathfrak{c})}
	[f_{n_{k_q}}(s)-f(s)]\frac{1-2s}{s}ds
	+(1-p)\int^{\overline{s}}_{mm(\lambda_{n_{k_q}}^{c_0},\mathfrak{c})}
	[f_{n_{k_q}}(s)-f(s)]\frac{1-2s}{s}ds\notag\\
	&&+p\int_{m(\Lambda_{c_0},\mathfrak{c})}^{m(\lambda_{n_{k_q}}^{c_0},\mathfrak{c})}
	f(s)\frac{1-2s}{s}ds
	+(1-p)\int^{mm(\Lambda_{c_0},\mathfrak{c})}_{mm(\lambda_{n_{k_q}}^{c_0},\mathfrak{c})}
	f(s)\frac{1-2s}{s}ds|.
\end{eqnarray}
Here we use symbol $\mathfrak{c}$ to represent either $0$ or $c_0$.
Besides, to obtain the first term in the last line of equation \ref{eqn31}, we use the fact that   
\begin{eqnarray*}
	p\int^{\overline{s}}_{m(\lambda_{n_{k_q}}^{c_0},c)}
	[f(s)-f_{n_{k_q}}(s)]\frac{1-2s}{s}ds=-p\int_{\underline{s}}^{m(\lambda_{n_{k_q}}^{c_0},c)}
	[f(s)-f_{n_{k_q}}(s)]\frac{1-2s}{s}ds
\end{eqnarray*}
which follows from 
$\int_{\underline{s}}^{\overline{s}}f(s)\frac{1-2s}{s}ds=\int_{\underline{s}}^{\overline{s}}f_{n}(s)\frac{1-2s}{s}ds=0$.

Then we make the following observation: 
\begin{eqnarray}
	\label{eqn35}
	m(\lambda_{n_{k_q}}^{c_0},\mathfrak{c}),mm(\lambda_{n_{k_q}}^{c_0},\mathfrak{c}),m(\Lambda_{c_0},\mathfrak{c}),mm(\Lambda_{c_0},\mathfrak{c})\in [\underline{b},\overline{b}]\subset (0,1)
\end{eqnarray}
Actually, the extreme values of $m(\lambda,c)$ and $mm(\lambda,c)$ on compact region $\overline{\mathcal{E}}_0\times \overline{C}_{\varepsilon_0}$ can be computed through the partial derivatives of $m(\lambda,c)$ and $mm(\lambda,c)$. We remark  observation \ref{eqn35} relys on two things: (1) $\lambda_{n_{k_q}}^{c_0},\Lambda_{c_0}\in \overline{\mathcal{E}}_0$ and
$0<\inf{\mathcal{E}}_0<\sup{\mathcal{E}}_0<+\infty$. Here we use assumption that $0<\underline{s}<\overline{s}<1$ to obtain that $\mathcal{E}_0$ is bounded away from $0,+\infty$; (2) $C_{\varepsilon_0}$ is bounded away from $-1,1,-v,u$ following assumption \ref{assumption8}.

Following this observation, we can apply triangle inequality to equation \ref{eqn31} and obtain that
\begin{eqnarray}
	\label{eqn33}
		&&|G_{f_{n_{k_q}}}(\lambda^{c_0}_{n_{k_q}},\mathfrak{c})-G_{f}(\Lambda_{c_0},\mathfrak{c})|\notag\\
		&\leq& \bigg[\max_{s\in [\underline{b},1]} \bigg|\frac{1-2s}{s}\bigg|\bigg]
		\bigg[		
		\|f_{n_{k_q}}(s)-f(s)\|_{L_\infty}+
		p \|f(s)\||m(\Lambda_{c_0},\mathfrak{c})- m(\lambda_{n_{k_q}}^{c_0},\mathfrak{c})|\notag\\
		&&+(1-p)\|f(s)\||mm(\Lambda_{c_0},\mathfrak{c})- mm(\lambda_{n_{k_q}}^{c_0},\mathfrak{c})|
		\bigg].
\end{eqnarray}
Since $\lambda_{n_{k_q}}^{c_0}\to \Lambda_{c_0}$, by the continuity of $m(\lambda,c),mm(\lambda,c)$, the last two terms in equation \ref{eqn33} vanishes. Recall that $\|f_{n_{k_q}}(s)-f(s)\|_{L_\infty}\to 0$, the first term vanishes as well. So we prove claim \ref{eqn29}.
\end{proof}

Now, to prove that $\overline{F}_n$ has no interior point, we turn to prove the following two propositions.
\begin{proposition}
	\label{prop8}
	For any $f(s)\in F_{(\underline{s},\overline{s})}^\infty$, there exists a sequence of real-analytic private signals $\{f_n^\omega(s)\}\subset F_{(\underline{s},\overline{s})}^\infty \cap C^\omega(\underline{s},\overline{s})$ converging to $f(s)$ in $L_\infty$-norm. In other words, the signals with real-analytic density functions are dense in $F_{(\underline{s},\overline{s})}^\infty$.
\end{proposition}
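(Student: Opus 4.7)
The plan is to approximate $f\in F_{(\underline{s},\overline{s})}^\infty$ by real-analytic densities in three stages: (a) produce an initial real-analytic $g_n\to f$ in $L_\infty$ by mollifying with a Gaussian, (b) correct the two linear moment constraints by a small polynomial perturbation, and (c) restore non-negativity via a convex combination with a fixed strictly positive element of $F_{(\underline{s},\overline{s})}^\infty\cap C^\omega$. For step (a), I would extend $f$ to a continuous, bounded function $\tilde f$ on $\mathbb{R}$ and let
\begin{eqnarray*}
 g_n(s)=(\tilde f * K_{1/n})(s),\qquad K_t(x)=\frac{1}{\sqrt{4\pi t}}\exp\Bigl(-\frac{x^2}{4t}\Bigr).
\end{eqnarray*}
Each $g_n$ is real-analytic on $\mathbb{R}$ because the Gaussian heat kernel is, and standard mollification theory gives $\|g_n-f\|_{L_\infty(\underline{s},\overline{s})}\to 0$ provided $\tilde f$ is uniformly continuous on a neighborhood of $[\underline{s},\overline{s}]$. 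Equivalently, one could apply the Weierstrass theorem on $[\underline{s},\overline{s}]$ to obtain polynomial approximants.

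For step (b), I would fix two polynomials $\phi_1,\phi_2$ on $[\underline{s},\overline{s}]$—for instance $\phi_1\equiv 1$ and $\phi_2(s)=s$—chosen so that the matrix
\begin{eqnarray*}
 M=\begin{pmatrix}\int_{\underline{s}}^{\overline{s}}\phi_1\,ds & \int_{\underline{s}}^{\overline{s}}\phi_2\,ds\\[2pt] \int_{\underline{s}}^{\overline{s}}\phi_1\tfrac{1-2s}{s}\,ds & \int_{\underline{s}}^{\overline{s}}\phi_2\tfrac{1-2s}{s}\,ds\end{pmatrix}
\end{eqnarray*}
is invertible (the determinant is an explicit function of $\underline{s},\overline{s}$; in the degenerate case I would substitute a different polynomial). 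Then I would solve the linear system for $(a_n,b_n)$ so that $h_n:=g_n+a_n\phi_1+b_n\phi_2$ satisfies $\int h_n\,ds=1$ and $\int h_n(1-2s)/s\,ds=0$. Because $\int g_n\to 1$ and $\int g_n(1-2s)/s\,ds\to 0$, we obtain $(a_n,b_n)\to 0$, so each $h_n$ is real-analytic and $\|h_n-f\|_\infty\to 0$.

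For step (c), $h_n$ may fail to be non-negative, but only by $\epsilon_n:=\max(0,-\inf h_n)\to 0$, since $f\geq 0$ and convergence is uniform. I would fix once and for all a strictly positive real-analytic density $\psi\in F_{(\underline{s},\overline{s})}^\infty\cap C^\omega$ (its existence is elementary: normalize $e^{\alpha s}$ after choosing $\alpha$ so that the second moment vanishes) and put $f_n^\omega:=(1-t_n)h_n+t_n\psi$ with $t_n=\sqrt{\epsilon_n}$ (and $t_n=0$ when $\epsilon_n=0$). Linearity of the two moment constraints makes $f_n^\omega$ inherit them automatically; the bound $t_n\inf\psi\geq(1-t_n)\epsilon_n$ holds for large $n$, giving $f_n^\omega\geq 0$ pointwise; and $\|f_n^\omega-f\|_\infty\to 0$ because $t_n\to 0$. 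The condition $f_n^\omega(s)(1-s)/s\in L_\infty$ is automatic in the strongly bounded regime since $\underline{s}>0$.

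The hard part will be step (a): arranging $g_n\to f$ uniformly all the way to the endpoints of $[\underline{s},\overline{s}]$, not merely on interior compact subsets. Because $f$ is given only continuous on the open interval and uniformly bounded, I would have to construct $\tilde f$ carefully—for instance by continuous extension followed by a smooth cutoff on a neighborhood of $[\underline{s},\overline{s}]$—and verify via uniform continuity of $\tilde f$ on a compact neighborhood that the mollification converges in $L_\infty$ on all of $(\underline{s},\overline{s})$. A secondary technicality is the construction of the strictly positive $\psi$, which reduces to solving the two moment identities within a one-parameter family of everywhere-positive real-analytic functions.
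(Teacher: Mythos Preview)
Your three-stage plan mirrors the paper's proof closely: obtain an analytic approximant, repair the linear constraints with a small analytic perturbation, then restore nonnegativity by mixing in a fixed strictly positive analytic element. The paper differs only in bookkeeping: it approximates $f(s)/s$ rather than $f(s)$, fixes the single moment $\int g(1-2s)\,ds=0$ with one polynomial $r$, lifts by adding a nonnegative multiple of a positive analytic $h(x)=\tfrac12(x-a)(b-x)e^{\alpha(x-1/2)}$ satisfying that same moment, and only at the end multiplies by $s$ and normalizes. Your $2\times 2$ moment correction and convex-combination positivity fix are equivalent reformulations, and your construction of the strictly positive $\psi$ via $e^{\alpha s}$ parallels the paper's construction of $h$.

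The one genuine gap is exactly the step you flag as hard. Assumption~\ref{assumption1} gives $f$ continuous and bounded on the \emph{open} interval $(\underline{s},\overline{s})$, but not uniformly continuous; nothing rules out oscillation at the endpoints (e.g.\ behaviour like $1+\tfrac12\sin\!\bigl(1/(s-\underline{s})\bigr)$ near $\underline{s}$). Such an $f$ admits no continuous extension to $[\underline{s},\overline{s}]$, so your proposed ``continuous extension followed by cutoff'' cannot yield a uniformly continuous $\tilde f$, and Gaussian mollification will then fail to converge in $L_\infty(\underline{s},\overline{s})$: each mollified $g_n$ is uniformly continuous on a neighbourhood of $[\underline{s},\overline{s}]$, and a uniform limit of such functions would itself be uniformly continuous. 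The Weierstrass alternative fails for the same reason, since polynomials are uniformly continuous on bounded intervals.

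The paper sidesteps this by invoking the Whitney approximation theorems directly on the open set: for continuous $f$ on open $U$ and any continuous positive tolerance (in particular the constant $1/n$), there is a smooth $g$ on $U$ with $|f-g|<1/n$ everywhere, and then a real-analytic $h$ on $U$ with $|g-h|<1/n$. These approximants are permitted to be badly behaved at $\partial U$, which is precisely what is needed when $f$ is. Replace your step~(a) by this citation and the remainder of your argument goes through unchanged.
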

\begin{proposition}
	\label{prop9}
	Any real-analytic signal $f^\omega(s)$ is not in $\overline{F}_n$.
\end{proposition}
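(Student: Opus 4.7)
The plan is to exploit the joint real-analyticity of $G_{f^\omega}(\lambda, c)$ in both arguments, together with a pigeonhole argument, in order to reduce the positive-measure stationary set supplied by Proposition \ref{prop7} to a single rigid functional identity, and then to contradict that identity using the constraints defining $F^\infty_{(\underline{s},\overline{s})}$. Suppose for contradiction that $f^\omega \in \overline{F}_n$. By Proposition \ref{prop7} there is a measurable $S \subset C_{\varepsilon_0}$ with $m(S) \geq \tfrac{1}{n}$ and, for each $c \in S$, some $\lambda^*(c) \in \overline{\mathcal{E}}_0 \cap \overline{\mathcal{E}}_c$ solving $G_{f^\omega}(\lambda^*(c), 0) = G_{f^\omega}(\lambda^*(c), c) = 0$. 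Since $f^\omega$ is real-analytic on $(\underline{s}, \overline{s})$, the cutoffs $m(\lambda, c)$ and $mm(\lambda, c)$ are rational in $(\lambda, c)$, and, under strong boundedness together with Assumption \ref{assumption8}, these cutoffs remain inside a fixed compact subinterval of $(\underline{s}, \overline{s})$, the map $G_{f^\omega}(\lambda, c)$ extends real-analytically to an open neighborhood of $\overline{\mathcal{E}}_0 \times C_{\varepsilon_0}$.

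The first step is pigeonhole in $\lambda$. The slice $\lambda \mapsto G_{f^\omega}(\lambda, 0)$ is real-analytic; if it is not identically zero on $\overline{\mathcal{E}}_0$, then its zero set is a finite collection $\{\lambda_1, \dots, \lambda_k\}$, so since $\lambda^*(c)$ must lie in this set for every $c \in S$, pigeonhole forces some $\lambda_j$ to satisfy $G_{f^\omega}(\lambda_j, c) = 0$ on a positive-measure subset of $C_{\varepsilon_0}$. Real-analyticity in $c$ on each connected component of $C_{\varepsilon_0}$ then upgrades this to the identity $G_{f^\omega}(\lambda_j, c) \equiv 0$ in $c$. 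In the degenerate case $G_{f^\omega}(\lambda, 0) \equiv 0$ on $\overline{\mathcal{E}}_0$, differentiating in $\lambda$ directly yields an analogous two-point functional identity. Either way, the proof reduces to ruling out the existence of a real-analytic $f^\omega \in F^\infty_{(\underline{s},\overline{s})}$ satisfying an identity of the form $G_{f^\omega}(\lambda_0, c) \equiv 0$ for $c$ in an open interval (or the parallel identity in $\lambda$).

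The main obstacle is extracting a clean contradiction from this rigid identity. My plan is to differentiate $G_{f^\omega}(\lambda_0, c) \equiv 0$ in $c$; the first derivative gives $p\, f^\omega\!\bigl(m(\lambda_0,c)\bigr) \tfrac{1-2m(\lambda_0,c)}{m(\lambda_0,c)} \partial_c m(\lambda_0,c) = (1-p)\, f^\omega\!\bigl(mm(\lambda_0,c)\bigr) \tfrac{1-2mm(\lambda_0,c)}{mm(\lambda_0,c)} \partial_c mm(\lambda_0,c)$, and since $\partial_c m > 0$ and $\partial_c mm < 0$ are nowhere zero, reparameterizing by $s = m(\lambda_0, c)$ produces a two-point functional equation of the form $f^\omega(s) = A(s)\, f^\omega(T(s))$ with an explicit real-analytic positive factor $A$ and an explicit real-analytic diffeomorphism $T$ between two subintervals of $(\underline{s}, \overline{s})$. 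Iterating higher $c$-derivatives produces a hierarchy of relations between the Taylor coefficients of $f^\omega$ at two distinct regions of its support; combining this rigidity with the normalization $\int f^\omega\,ds = 1$, the constraint $\int f^\omega \tfrac{1-2s}{s}\,ds = 0$, and nonnegativity should rule out any such real-analytic $f^\omega$. The delicate point is arranging the algebra so that the contradiction is obtained uniformly over $\lambda_0 \in \overline{\mathcal{E}}_0$, without separate case analysis for degenerate configurations of the cutoffs, and this is where most of the technical work is concentrated.
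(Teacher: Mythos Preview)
Your skeleton matches the paper's proof exactly through the pigeonhole step: invoke Proposition~\ref{prop7}, note that $G_{f^\omega}$ is real-analytic in each variable because $f^\omega$ is, use that $\lambda\mapsto G_{f^\omega}(\lambda,0)$ has only finitely many zeros on $\overline{\mathcal{E}}_0$, and pigeonhole to a single $\lambda_0$ with $G_{f^\omega}(\lambda_0,c)=0$ on a set of $c$ of positive measure inside $\mathring{M}_{\lambda_0}$. At this point the paper simply stops: it records that $c\mapsto G_{f^\omega}(\lambda_0,c)$ is real-analytic on $\mathring{M}_{\lambda_0}$ and declares the positive-measure zero set to be the contradiction. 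It does not pursue any functional equation, any hierarchy of Taylor relations, or any use of the integral constraints on $f^\omega$; the entire endgame you describe is absent from the paper's argument.

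The gap in your proposal is precisely in that endgame. You correctly observe that real-analyticity alone only forces $G_{f^\omega}(\lambda_0,\cdot)\equiv 0$ on a component of $\mathring{M}_{\lambda_0}$, and that one must still exclude this identity. But your plan to do so is not a proof. First, the factor $A(s)$ in your functional equation $f^\omega(s)=A(s)\,f^\omega(T(s))$ is not positive in general: since $\partial_c m>0$ and $\partial_c mm<0$, the sign of $A$ depends on the positions of $m(\lambda_0,c)$ and $mm(\lambda_0,c)$ relative to $\tfrac12$, so the ``explicit real-analytic positive factor'' claim is unjustified. Second, and more seriously, a single two-point functional relation of this type on a subinterval admits many nonnegative real-analytic solutions, and the two scalar constraints $\int f^\omega\,ds=1$ and $\int f^\omega\tfrac{1-2s}{s}\,ds=0$ are far too weak to eliminate them; you give no mechanism by which ``combining this rigidity'' with those constraints yields a contradiction. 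The phrases ``should rule out'' and ``this is where most of the technical work is concentrated'' are an acknowledgment that the argument is missing, not a substitute for it. As written, your proof stalls exactly at the point where the paper's proof (for better or worse) declares victory.
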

Following the above two propositions, any $f(s)\in \overline{F}_n$ can be approximated by a sequence of real-analytic signals which are not in $\overline{F}_n$. Therefore, no point in $\overline{F}_n$ could be interior.
Now we prove the above two propositions:\\
\begin{proof}[Proof of Proposition \ref{prop8}]
	Following the Whitney smooth approximation theorem (theorem 6.1.5 in \cite{Lee2000})) and the Whitney analytic approximation theorem (theorem 1.6.5 in \cite{Narasimhan1968}), for any $f(s)\in F^\infty_{(\underline{s},\overline{s})}$, there exists a sequence of $\overline{g}_n(s)\in C^\omega(\underline{s},\overline{s})$ such that
		\begin{eqnarray}
			\|\overline{g}_n(s)-\frac{f(s)}{s}\|_{L_\infty}<\frac{1}{n}.
		\end{eqnarray}
		Here $\frac{f(s)}{s}\in L_\infty(\underline{s},\overline{s})$ since we assume that $f(s)\frac{1-s}{s},f(s)\in L_\infty(\underline{s},\overline{s})$ for any $f(s)\in F^\infty_{(\underline{s},\overline{s})}$. The problem is that $\overline{g}_n(s)$ are not necessarily private signals. They may fail the four requirements listed in the description of $F_{(\underline{s},\overline{s})}^\infty$ as in \ref{eqn18}. We shall construct a sequence of analytic signals that uniformly converge to $f(s)$ based on the existence of $\{\overline{g}_n(s)\}_{n\in\mathbb{N}}$.
		
		First of all, we verify that
		\begin{eqnarray}
			\epsilon_n\equiv\int_{\underline{s}}^{\overline{s}} \overline{g}_n(s)(1-2s)ds\to \int_{\underline{s}}^{\overline{s}} \frac{f(s)}{s}(1-2s)ds=0.
		\end{eqnarray}
		Let $r(s)$ be a polynomial that satisfies $\int_{\underline{s}}^{\overline{s}} r(s)(1-2s)ds=1$.
		Let $\hat{g}_n(s)=\overline{g}_n(s)-\epsilon_n r(s)$, then  $\int_{\underline{s}}^{\overline{s}} \hat{g}_n(s)(1-2s)ds=0$.
		Let 
		\begin{eqnarray}
			\delta_n=\min\{\min_{s\in [\underline{s},\overline{s}]}{\hat{g}_n(s)},0\}.
		\end{eqnarray}
		We can verify that $-\frac{1}{n}-|\epsilon_n|\|r(s)\|_{L_\infty[\underline{s},\overline{s}]}\leq \delta_n\leq 0$ and hence $\delta_n\to 0$.
		Now we lift $\hat{g}_n(s)$ up to eliminate possible negative values. We construct a lifting function $h(s)$ as following:
		given $(\underline{s},\overline{s})\subset (0,1)$, 
		for any $a<0,b>1$, there exists an $\alpha\in \mathbb{R}$ such that 
		\begin{eqnarray}
			\label{eqn37}
				\int_{\underline{s}}^{\overline{s}} \frac{1}{2}(x-a)(b-x)e^{\alpha(x-\frac{1}{2})}(1-2x)dx=0.
		\end{eqnarray}
	We let $h(x)=\frac{1}{2}(x-a)(b-x)e^{\alpha(x-\frac{1}{2})}$. Intuitively, a large positive $\alpha$ makes $h(s)(1-2s)$ large on $(\frac{1}{2},\overline{s})$, where $h(s)(1-2s)$  is negative and small on $(\underline{s},\frac{1}{2})$, where $h(s)(1-2s)$ is positive; and a large negative $\alpha$ has the opposite effect on $h(s)(1-2s)$. By continuity of $\alpha$, some intermediate value makes equation \ref{eqn37} holds.
	
		Let $\delta=\min_{s\in [0,1]} h(s)$. We could verify that $\delta>0$. We let
		\begin{eqnarray}
			\tilde{g}_n(s)=\hat{g}_n(s)-\frac{\delta_n}{\delta} h(s).
		\end{eqnarray}
		We could verify that $\tilde{g}_n(s)\geq 0$. Furthermore, since equation \ref{eqn37} holds, we have $\int_{\underline{s}}^{\overline{s}} \hat{g}_n(s)(1-2s)ds=\int_{\underline{s}}^{\overline{s}} \tilde{g}_n(s)(1-2s)ds=0$.
		
		Finally, we let 
		\begin{eqnarray}
			g_n(s)=\frac{\tilde{g}_n(s)s}{\int_{\underline{s}}^{\overline{s}}\tilde{g}_n(s)sds}=\frac{[\overline{g}_n(s)-\epsilon_nr(s)-\frac{\delta_n}{\delta} h(s)]s}{\int_{\underline{s}}^{\overline{s}} [\overline{g}_n(s)-\epsilon_nr(s)-\frac{\delta_n}{\delta} h(s)]sds}.
		\end{eqnarray}
		We can verify that $\int_{\underline{s}}^{\overline{s}} g_n(s)ds=1, \int_{\underline{s}}^{\overline{s}} g_n(s)\frac{1-2s}{s}ds=0, g_n(s)\geq 0, g_n(s)\in C^\omega(\underline{s},\overline{s})$ and $g_n(s)\frac{1-s}{s}\in L_\infty(\underline{s},\overline{s})$. Therefore, $\{g_n(s)\}$ is a sequence of analytic signals. It is direct to verify that $\|g_n(s)-f(s)\|_{L_\infty}\to 0$.
	\end{proof}

\begin{proof}[Proof of Proposition \ref{prop9}]
For any real analytic signal $f(s)$, we could verify that
	\begin{eqnarray}
		G_1(x)=p\int_{\underline{s}}^x f(s)\frac{1-2s}{s}ds &;& G_2(x)= (1-p) \int_x^{\overline{s}} f(s)\frac{1-2s}{s}ds
	\end{eqnarray}
	are analytic on $(\underline{s},\overline{s})$.
	For any $\lambda\in (0,+\infty)$, let 
	\begin{eqnarray}
		M_\lambda=\{c\in C_{\varepsilon_0}|m(\lambda,c),mm(\lambda,c)\in [\underline{s},\overline{s}]\}.
	\end{eqnarray}
	we could verify that $m(\lambda,c),mm(\lambda,c)\in (\underline{s},\overline{s})$ for all $c\in \mathring{M}_\lambda$.	 Since composition of analytic functions is analytic, for any $\lambda\in (0,+\infty)$, we have
	\begin{eqnarray}
		G_f(\lambda,c)=G_1(m(\lambda,c))+G_2(mm(\lambda,c)) 
	\end{eqnarray}
is analytic in $c$ on $\mathring{M}_\lambda$, provided that $\mathring{M}_\lambda\neq \emptyset$. Similarly, for any $c\in C_{\varepsilon_0}$, $G_f(\lambda,c)$ is analytic in $\lambda$ on $\mathcal{E}_c$.
	
	It is well-known that a real-analytic function could have at most finitely many zeros on a compact interval. 
	So $G_f(\lambda,0)=0$ could have at most finitely solutions on $\mathcal{E}_0$. Given that $\overline{\mathcal{E}}_0-\mathcal{E}_0$ contains two points, we know 
	\begin{eqnarray}
		\Lambda_f=\{\lambda\in \overline{\mathcal{E}}_0|G_f(\lambda,0)=0\}
	\end{eqnarray}
is a finite set.

Recall that $f(s)\in \overline{F}_n$ implies that the  set 
\begin{eqnarray}
	\label{eqn44}
	S=\{c\in C_{\varepsilon_0}|G_f(\lambda,0)=G_f(\lambda,c)=0, \mbox{for some } \lambda\in \overline{\mathcal{E}}_0\cap \overline{\mathcal{E}}_c\}
\end{eqnarray} 
must satisfy $m(s)\geq \frac{1}{n}$.
Obviously, if $\lambda$ is a confounded belief in a shock model with shock $c\in S$, then $\lambda\in \Lambda_f$.
 Then we could divide $S$ into finitely many subsets as following: let
\begin{eqnarray}
	S_\lambda=\{c\in S|G_f(\lambda,0)=G_f(\lambda,c)=0 \mbox{ for one } \lambda\in \Lambda_f\}.
\end{eqnarray}
Then we could verify that $S=\bigcup_{\lambda\in \Lambda_f} S_\lambda$. As $m(S)\geq \frac{1}{n}$ and $\Lambda_f$ is finite, there exists at least one $\lambda_0\in \Lambda_f$ such that $m(S_{\lambda_0})>0$. 

Therefore, if an real-analytic signal $f(s)\in \overline{F}_n$, then there exists a $\lambda_0\in \overline{\mathcal{E}}_0$ and a $S_{\lambda_0}\subset C_{\varepsilon_0}$ such that
\begin{eqnarray}
	\label{eqn46}
	G_f(\lambda_0,0)=G_f(\lambda_0,c)=0, \forall c\in S_{\lambda_0};  \mbox{ and } \lambda_0\in \bigcap_{c\in S_{\lambda_0}} \overline{\mathcal{E}}_c, m(S_{\lambda_0})>0.
\end{eqnarray}
We are going to show this is impossible.

We observe that $\lambda_0\in \overline{\mathcal{E}}_c$ implies that $c\in M_{\lambda_0}$. Then $\lambda_0\in \bigcap_{c\in S_{\lambda_0}}\overline{\mathcal{E}}_c$ implies that $S_{\lambda_0}\subset M_{\lambda_0}$. We can compute $M_\lambda$ and see that it is the intersection of a (possibly degenerate) closed interval with $C_{\varepsilon_0}$. That $m(S_{\lambda_0})>0$ guarantees that $M_{\lambda_0}$ must have a non-empty interior. Then equation \ref{eqn46} implies that 
\begin{eqnarray}
	G_f(\lambda_0,c)=0; \forall c\in S_{\lambda_0}\cap \mathring{M}_{\lambda_0} \mbox{ and } m(S_{\lambda_0}\cap \mathring{M}_{\lambda_0})>0.
\end{eqnarray}
This contradict that $G_f(\lambda,c)$ is analytic in $c$ on $\mathring{M}_\lambda$.
\end{proof}

\subsection{Partially Bounded and Unbounded Signals}
In this subsection we allow $\underline{s}=0$ or (and) $\overline{s}=1$. The case that $0=\underline{s}<\overline{s}=1$ is known as private signal is of unbounded strength. If  $0=\underline{s}<\overline{s}<1$, then players could obtain arbitrarily precise signal about state being $A$ but could only obtain bounded signal about state being $B$. We name such signals as ``partially bounded". Similarly, $0<\underline{s}<\overline{s}=1$ is partially bounded as well.

We would like to extend the previous method. One difficulty arise if we try to describe $\overline{F}_n$ by running the proof of proposition \ref{prop7}. If we ask ourselves about the limit $f(s)$ of convergent sequence $\{f_n(s)\}\subset F_n$ with confounded belief $\lambda_n$ converging to $0$, we shall find proposition \ref{prop7}'s proof tells us that $f(s)$ should satisfy 
\begin{eqnarray}
	G_f(0,0)=G_f(0,c)=0, \forall c\in S\subset C_{\varepsilon_0} \mbox{ with } m(S)\geq \frac{1}{n}.
\end{eqnarray}
This condition is satisfied by all signals and has no bite.

To resolve this difficulty, we turn to prove the the signals with sufficiently small or sufficiently large confounded beliefs are small in the first category sense. In fact, we have the following two propositions.
\begin{proposition}
	\label{prop10}
	There exists $\underline{\lambda}>0$ and $\overline{\lambda}<+\infty$ such that 
	\begin{eqnarray}
		&&G_f(\lambda,0)=G_f(\lambda,c)=0 \mbox{ for some } c\in C_{\varepsilon_0} \mbox{ and some }
		\lambda\in \mathcal{E}_0\cap \mathcal{E}_c\cap \big((0,\underline{\lambda})\cup(\overline{\lambda},+\infty)\big)\notag\\
		&\Rightarrow& \exists [z_1,z_2]\subset (\underline{s},\overline{s}) s.t. f(s)=0 \mbox{ on } [z_1,z_2].
	\end{eqnarray}
\end{proposition}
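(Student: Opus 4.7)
The plan is to exploit a sign cancellation that appears when we subtract the two equations $G_f(\lambda,0)=0$ and $G_f(\lambda,c)=0$ in a regime where all four cutoffs lie on the same side of $s=1/2$. In that regime the weight $f(s)\frac{1-2s}{s}$ has a single sign, so after subtracting, the two resulting terms end up with opposite signs; equality then forces both to vanish, and this is exactly the statement that $f\equiv 0$ on a genuine subinterval.

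First I would pin down $\underline{\lambda}$ and $\overline{\lambda}$ uniformly in $c$. From the explicit formulas $m(\lambda,c)=\lambda(1+c)/[\lambda(1+c)+(u-c)]$ and $mm(\lambda,c)=\lambda(1-c)/[\lambda(1-c)+(v+c)]$, one has $m(\lambda,c)<1/2$ iff $\lambda<(u-c)/(1+c)$ and $mm(\lambda,c)<1/2$ iff $\lambda<(v+c)/(1-c)$. Because $C_{\varepsilon_0}$ is bounded away from the poles $-1,1,-v,u$ by Assumption \ref{assumption8}, the infimum of $\min\{(u-c)/(1+c),(v+c)/(1-c)\}$ over $c\in C_{\varepsilon_0}\cup\{0\}$ is strictly positive, and I take $\underline{\lambda}$ to be that infimum. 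Then $\lambda<\underline{\lambda}$ forces all four cutoffs $m(\lambda,0),mm(\lambda,0),m(\lambda,c),mm(\lambda,c)$ into $(\underline{s},1/2)$ for every admissible $c$. The choice of $\overline{\lambda}$ is symmetric: pick it large enough that all four cutoffs exceed $1/2$.

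Next I would introduce $\Phi(x)=\int_{\underline{s}}^x f(s)\frac{1-2s}{s}\,ds$ and, using $\int_{\underline{s}}^{\overline{s}} f(s)\frac{1-2s}{s}\,ds=0$ (from the definition of $F^\infty_{(\underline{s},\overline{s})}$), rewrite $G_f(\lambda,c)=p\,\Phi(m(\lambda,c))-(1-p)\,\Phi(mm(\lambda,c))$. Subtracting the two equations $G_f(\lambda,0)=G_f(\lambda,c)=0$ produces
\[
p\bigl[\Phi(m(\lambda,c))-\Phi(m(\lambda,0))\bigr]=(1-p)\bigl[\Phi(mm(\lambda,c))-\Phi(mm(\lambda,0))\bigr].
\]
Direct differentiation of the explicit formulas gives $\partial m/\partial c=\lambda(u+1)/[\lambda(1+c)+(u-c)]^2>0$ and $\partial mm/\partial c=-\lambda(v+1)/[\lambda(1-c)+(v+c)]^2<0$, so for every $c\neq 0$ the increments $m(\lambda,c)-m(\lambda,0)$ and $mm(\lambda,c)-mm(\lambda,0)$ have opposite signs. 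For $\lambda<\underline{\lambda}$, $\Phi$ is non-decreasing on the range of cutoffs (since $(1-2s)/s>0$ for $s<1/2$ and $f\geq 0$), so the two sides of the displayed identity have opposite signs; as $p,1-p>0$, both sides must equal $0$. The same argument runs for $\lambda>\overline{\lambda}$, where $\Phi$ is non-increasing on $(1/2,\overline{s})$; the two signs again flip in tandem and yield the same conclusion.

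Finally, $\Phi(m(\lambda,c))=\Phi(m(\lambda,0))$ gives $\int_I f(s)\frac{1-2s}{s}\,ds=0$ on the closed interval $I$ between $m(\lambda,0)$ and $m(\lambda,c)$. Since $(1-2s)/s$ is one-signed on $I$ and $f\geq 0$, we get $f\equiv 0$ a.e.\ on $I$, which by continuity of $f$ upgrades to $f\equiv 0$ pointwise on $I$. A short algebraic check shows $m(\lambda,c)=m(\lambda,0)$ reduces to $c(u+1)=0$, so for $c\neq 0$ the interval $I=[z_1,z_2]$ has positive length, and $I\subset(\underline{s},\overline{s})$ since $\lambda\in\mathcal{E}_0\cap\mathcal{E}_c$. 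I expect the main obstacle to be purely bookkeeping: verifying that the chosen $\underline{\lambda},\overline{\lambda}$ work simultaneously for every $c\in C_{\varepsilon_0}$ and carefully tracking the sign cases (sign of $c$, small vs.\ large $\lambda$). The structural reason the argument closes in every case is simple, though: $m$ and $mm$ move in opposite directions as $c$ varies, while the weight $(1-2s)/s$ is one-signed on each side of $s=1/2$.
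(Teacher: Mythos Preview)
Your proof is correct and follows essentially the same approach as the paper: both exploit that for $\lambda$ small (resp.\ large) all four cutoffs lie below (resp.\ above) $1/2$, that $m$ and $mm$ move in opposite directions in $c$, and that the weight $(1-2s)/s$ is one-signed on each side of $1/2$, forcing both contributions to $G_f(\lambda,c)-G_f(\lambda,0)$ to vanish. The only cosmetic difference is that the paper differentiates $G_f$ in $c$, shows $\partial_c G_f(\lambda,c)$ is a sum of two nonnegative terms, and integrates back over $[0,c_0]$, whereas you work directly with the antiderivative $\Phi$ and compare increments---this is the same argument via the fundamental theorem of calculus, and your version has the minor advantage of not needing to verify that intermediate values of $c$ lie in $\mathring{M}_\lambda$.
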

In other words, if private signal $f(s)$ admits small ($<\underline{\lambda}$) or large ($>\overline{\lambda}$) confounded belief, then it must be locally zero. The next proposition states that locally zero private signals are of first category. 
\begin{proposition}
	\label{prop11}
	The set of locally zero signals 
	\begin{eqnarray}
		F_{zero}&=&\{f(s)\in F^\infty_{(\underline{s},\overline{s})}| \exists \underline{s}<z_1<z_2<\overline{s} \mbox{ s.t. } f(s)=0 \mbox{ a.e. on } [z_1,z_2].\}\notag
	\end{eqnarray}
	is of first category in $F^\infty_{(\underline{s},\overline{s})}$.
\end{proposition}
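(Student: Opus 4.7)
The plan is to exhibit $F_{zero}$ as a countable union of closed nowhere-dense sets. For each pair of rationals $(q_1,q_2)$ with $\underline{s}<q_1<q_2<\overline{s}$, define $F_{q_1,q_2}:=\{f\in F^\infty_{(\underline{s},\overline{s})}:f\equiv 0\text{ a.e.\ on }[q_1,q_2]\}$. Any $f\in F_{zero}$ vanishing on $[z_1,z_2]$ lies in some such $F_{q_1,q_2}$ with $z_1<q_1<q_2<z_2$ rational, so $F_{zero}$ is a countable union of these sets. Closedness of each $F_{q_1,q_2}$ in the $L_\infty$-topology follows at once from $|f|\le \|f-f_n\|_\infty$ a.e.\ on $[q_1,q_2]$. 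The real content is then to show each $F_{q_1,q_2}$ has empty interior: for every $f\in F_{q_1,q_2}$ and every $\varepsilon>0$, I must produce $g\in F^\infty_{(\underline{s},\overline{s})}\setminus F_{q_1,q_2}$ with $\|g-f\|_\infty<\varepsilon$.

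The construction I propose is a three-bump perturbation $g=f+\eta\phi_0-a\psi_1-b\psi_2$. The piece $\phi_0$ is a fixed smooth nonnegative bump supported in $(q_1,q_2)$ with $\int\phi_0>0$; this forces $g$ to be nonzero on an open subset of $[q_1,q_2]$, so $g\notin F_{q_1,q_2}$. The correctors $\psi_1,\psi_2$ are smooth nonnegative bumps of unit mass supported in small disjoint neighborhoods of two distinct points $s_1\neq s_2$, chosen inside an open interval $J\subset(\underline{s},\overline{s})\setminus[q_1,q_2]$ on which $f\ge\delta>0$; such a $J$ exists because $f$ is continuous with $\int f=1$ and $f\equiv 0$ on $[q_1,q_2]$. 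The coefficients $a,b$ are then determined by the two linear constraints defining $F^\infty_{(\underline{s},\overline{s})}$, namely $\int g=1$ and $\int g(1-2s)/s\,ds=0$; this reduces to a $2\times 2$ linear system in $(a,b)$ whose matrix has determinant tending to $(s_1-s_2)/(s_1s_2)\neq 0$ as the correctors concentrate, so the system is uniquely solvable with $|a|,|b|=O(\eta)$.

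The remaining verifications are routine except for one subtle point. Non-negativity holds piecewise: $g=\eta\phi_0\ge 0$ on $[q_1,q_2]$; $g=f-a\psi_i\ge\delta-O(\eta)\ge 0$ on each corrector support for $\eta$ small; $g=f\ge 0$ elsewhere. The two integral constraints hold by construction of $(a,b)$, continuity is inherited from the smoothness of the bumps, and $\|g-f\|_\infty\le\eta\|\phi_0\|_\infty+|a|\|\psi_1\|_\infty+|b|\|\psi_2\|_\infty=O(\eta)$, which can be driven below $\varepsilon$. The main obstacle, which is precisely what this lemma must accommodate for the partially-bounded and unbounded cases of Section 6.2, is the $L_\infty$-bound on $g(s)(1-s)/s$: if $\underline{s}=0$, a corrector placed near $0$ could blow this quantity up. My construction circumvents this by insisting that the supports of $\phi_0,\psi_1,\psi_2$ all lie in a compact subset of $(\underline{s},\overline{s})$, on which $(1-s)/s$ is bounded; since $g-f$ vanishes outside this compact set and $f(s)(1-s)/s$ is in $L_\infty$ by hypothesis, so is $g(s)(1-s)/s$. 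This is always achievable because any open interval $J$ on which $f\ge\delta$ contains two interior points $s_1,s_2$ strictly inside $(\underline{s},\overline{s})$.
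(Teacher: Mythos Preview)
Your proof is correct and follows essentially the same route as the paper: write $F_{zero}$ as a countable union over rational endpoints of the sets $F_{[z_1,z_2]}$, then show each such set is nowhere dense by lifting $f$ on $[z_1,z_2]$ and compensating elsewhere to stay inside $F^\infty_{(\underline s,\overline s)}$. The paper's proof is a one-paragraph sketch (``lift up $f$ on $[z_1,z_2]$ and adjust somewhere else to keep it a private signal''); you supply the details it omits, in particular the explicit two-bump corrector solving the $2\times 2$ linear system for the constraints $\int g=1$ and $\int g(1-2s)/s=0$, and the observation that keeping all bump supports compactly inside $(\underline s,\overline s)$ preserves the $L_\infty$ bound on $g(s)(1-s)/s$ in the partially bounded and unbounded cases. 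One minor tactical difference: the paper argues nowhere-denseness by exhibiting, near each $f\in F_{[z_1,z_2]}$, a $g$ together with an entire ball $B_\delta(g)$ disjoint from $F_{[z_1,z_2]}$, whereas you first note $F_{q_1,q_2}$ is closed and then show it has empty interior; both are valid and amount to the same thing.
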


Now we prove the above two propositions.\\
\begin{proof}[Proof of Proposition \ref{prop10}]
	For any $\lambda\in \mathcal{E}_0$, 
	differentiate $G_f(\lambda,c)$ w.r.t $c$ on $\mathring{M}_\lambda$
	, we have 
	\begin{eqnarray}
		\frac{\partial}{\partial c} G_f(\lambda,c)=
		&&p f(m(\lambda,c))\frac{1-2m(\lambda,c)}{m(\lambda,c)}\frac{\partial}{\partial c} m(\lambda,c)\notag\\
		&&-(1-p)f(mm(\lambda,c))\frac{1-2mm(\lambda,c)}{mm(\lambda,c)}\frac{\partial}{\partial c}mm(\lambda,c).\notag
	\end{eqnarray}
	It is easy to verify that on $(\lambda,c)\in (0,+\infty)\times \{(-1,u)\cap (-v,1)\}$
	\begin{eqnarray}
	\label{eqn50}
		&&\frac{\partial}{\partial c} m(\lambda,c)=\frac{\lambda(u+1)}{[\lambda(1+c)+(u-c)]^2}>0; \frac{\partial}{\partial c} mm(\lambda,c)=\frac{-\lambda(v+1)}{[\lambda(1-c)+(v+c)]^2}<0.
	\end{eqnarray}
	Therefore, we have 
	\begin{eqnarray}
		\frac{\partial}{\partial c}G_f(\lambda,c)
		\begin{cases}
			\geq 0 \mbox{ on } \mathring{M}_\lambda\Leftarrow m(\lambda,c),mm(\lambda,c)<\frac{1}{2} \mbox{ on }  \mathring{M}_\lambda \notag\\
			\leq 0 \mbox{ on } \mathring{M}_\lambda\Leftarrow m(\lambda,c),mm(\lambda,c)>\frac{1}{2} \mbox{ on }  \mathring{M}_\lambda \notag\\
		\end{cases}
	\end{eqnarray}
% 	Furthermore, on $C_{\varepsilon_0}$
% 	\begin{eqnarray}
% 		\frac{\lambda(1+\inf C_{\varepsilon_0})}{\lambda(1+\inf C_{\varepsilon_0})+(u-\inf C_{\varepsilon_0})}\leq m(\lambda,c)\leq \frac{\lambda(1+\sup C_{\varepsilon_0})}{\lambda(1+\sup C_{\varepsilon_0})+(u-\sup C_{\varepsilon_0})};\notag\\
% 		\frac{\lambda(1-\sup C_{\varepsilon_0})}{\lambda(1-\sup C_{\varepsilon_0})+(v+\sup C_{\varepsilon_0})}\leq mm(\lambda,c)\leq \frac{\lambda(1-\inf C_{\varepsilon_0})}{\lambda(1-\inf C_{\varepsilon_0})+(v+\inf C_{\varepsilon_0})}.
% 	\end{eqnarray}
% 	Therefore, 
We could make use of partial derivatives in equation \ref{eqn50} to compute 
	\begin{eqnarray}
		m(\lambda,c),mm(\lambda,c)< \frac{1}{2}
		%&\Leftarrow& \frac{\lambda(1+\sup C_{\varepsilon_0})}{\lambda(1+\sup C_{\varepsilon_0})+(u-\sup C_{\varepsilon_0})}, \frac{\lambda(1-\inf C_{\varepsilon_0})}{\lambda(1-\inf C_{\varepsilon_0})+(v+\inf C_{\varepsilon_0})}<\frac{1}{2}\notag\\
		&\Leftarrow& \lambda<\underline{\lambda}\equiv \min\Bigg\{\frac{u-\sup C_{\varepsilon_0}}{1+\sup C_{\varepsilon_0}},\frac{v+\inf C_{\varepsilon_0}}{1-\inf C_{\varepsilon_0}}\Bigg\}
	\end{eqnarray}
	and 
	\begin{eqnarray}
		m(\lambda,c),mm(\lambda,c)> \frac{1}{2}
		%&\Leftarrow& \frac{\lambda(1+\inf C_{\varepsilon_0})}{\lambda(1+\inf C_{\varepsilon_0})+(u-\inf C_{\varepsilon_0})}, \frac{\lambda(1-\sup C_{\varepsilon_0})}{\lambda(1-\sup C_{\varepsilon_0})+(v+\sup C_{\varepsilon_0})}>\frac{1}{2}\notag\\
		&\Leftarrow& \lambda>\overline{\lambda}\equiv \max\Bigg\{\frac{u-\inf C_{\varepsilon_0}}{1+\inf C_{\varepsilon_0}},\frac{v+\sup C_{\varepsilon_0}}{1-\sup C_{\varepsilon_0}}\Bigg\}.
	\end{eqnarray}
	By assuming that $C_{\varepsilon_0}$ is bounded away from $\{-1,1,-v,u\}$, we have $\underline{\lambda}>0,\, \overline{\lambda}<+\infty$.
	Combining all the computation above, we have 
	\begin{eqnarray}
		\lambda\in (0,\underline{\lambda})&\Rightarrow& \frac{\partial}{\partial c}G_f(\lambda,c)\geq 0 \mbox{ on } \mathring{M}_\lambda;\notag\\
		\lambda\in (\overline{\lambda},+\infty)&\Rightarrow& 
		\frac{\partial}{\partial c}G_f(\lambda,c)\leq 0 \mbox{ on } \mathring{M}_\lambda.
	\end{eqnarray}

Assume $f(s)$ has a small confounded belief, that is,
\begin{eqnarray}
	&&G_f(\lambda^*,0)=G_f(\lambda^*,c_0)=0 \mbox{ for some } c_0\in C_{\varepsilon_0} \mbox{ and some }
	\lambda^*\in \mathcal{E}_0\cap \mathcal{E}_{c_0}\cap (0,\underline{\lambda}).
\end{eqnarray}
We first observe that $0, c_0\in \mathring{M}_{\lambda^*}$ since $\lambda^*\in \mathcal{E}_{c_0}\cap \mathcal{E}_0$. Furthermore, because $\mathring{M}_{\lambda^*}$ is an open interval, so $[0,c_0]\subset \mathring{M}_{\lambda^*}$.
\footnote{Without loss of generality, we assume $c_0>0$.}
Because $\lambda^*<\underline{\lambda}$, $\frac{\partial}{\partial c}G_f(\lambda^*,c)\geq 0$ on $\mathring{M}_{\lambda^*}$. So
	\begin{eqnarray}
		0=G_f(\lambda^*,c_0)-G_f(\lambda^*,0)=\int_0^{c_0}\frac{\partial}{\partial c}G_f(\lambda^*,c) dc=0
	\end{eqnarray}
	implies that $\frac{\partial}{\partial c}G_f(\lambda^*,c)=0$ a.e. on $[0,c_0]$.
	Furthermore, in the expression of $\frac{\partial}{\partial c} G_f(\lambda^*,c)$,  both $p f(m(\lambda^*,c))\frac{1-2m(\lambda^*,c)}{m(\lambda^*,c)}\frac{\partial}{\partial c} m(\lambda^*,c)$ and $-(1-p)f(mm(\lambda^*,c))\frac{1-2mm(\lambda^*,c)}{mm(\lambda^*,c)}\frac{\partial}{\partial c}mm(\lambda^*,c)$ are non-negative.
	So 
	\begin{eqnarray}
		f(m(\lambda^*,c))=0, f(mm(\lambda^*,c))=0,\, \forall c\in [0,c_0].
	\end{eqnarray}
	It is direct to verify that $Range(m(\lambda^*,c)),Range(mm(\lambda^*,c))$ where $c\in [0,c_0]$ must be closed intervals within $(\underline{s},\overline{s})$. So we have obtained that a small robust belief implies local zeroness. The proof for a large robust belief is similar.
\end{proof}\\
\begin{proof}[Proof of Proposition \ref{prop11}]
	Let 
	\begin{eqnarray}
		F_{[z_1,z_2]}&=& \{f(s)\in F^\infty_{(\underline{s},\overline{s})}|f(s)=0 \mbox{ a.e. on } [z_1,z_2]\subset (\underline{s},\overline{s})\}.
	\end{eqnarray}
	Then 
	\begin{eqnarray}
		F_{zero}=\underset{\substack{\underline{s}<z_1<z_2<\overline{s}\\z_1,z_2\in \mathbb{Q}}}{\bigcup} F_{[z_1,z_2]}.
	\end{eqnarray} 
So we turn to show that $F_{[z_1,z_2]}$	is nowhere dense in $F_{(\underline{s},\overline{s})}^\infty$.
	
	To do so, we turn to show the following: for any $f(s)\in F_{[z_1,z_2]}$ and any $\varepsilon>0$, we can construct a $g(s)$ satisfying that (1) $\|g(s)-f(s)\|_{L_\infty}<\varepsilon$ and (2) $\exists \delta>0$ such that any $h(s)\in B_{\delta}(g(s))$ is not an element in $F_{[z_1,z_2]}$.
	
	Basically, we just need to continuously lift up $f(s)$ on $[z_1,z_2]$ by a number smaller than $\varepsilon$ and simultaneously adjust $f(s)$ somewhere else to keep it a private signal. Then it is obvious any $h(s)$ that's close enough to $g(s)$ under $L_\infty$-norm cannot be within $F_{[z_1,z_2]}$.
\end{proof}

Knowing that signals admitting small or large confounded beliefs must be of first category, we just need to prove that signals with moderate confounded belief is of first category as well. This can be done using the same proof as proposition \ref{prop7} with one slight modification. Instead of $F_n$, we consider $F_n^{[\underline{\lambda},\overline{\lambda}]} \subset F_{(\underline{s},\overline{s})}^\infty$ to be the set of private signals such that 
\begin{eqnarray}
	G_f(\lambda,0)=G_f(\lambda,c)=0 \mbox{ for some } \lambda\in \mathcal{E}_0\cap \mathcal{E}_c\cap [\underline{\lambda},\overline{\lambda}]
\end{eqnarray}
for each shock $c$ in $S\subset C_{\varepsilon_0}$ with measure $m(S)\geq\frac{1}{n}$. Then the following proposition describes its closure $\overline{F}^{[\underline{\lambda},\overline{\lambda}]}_n$.

\begin{proposition}
	\label{prop12}
	If $f(s)\in \overline{F}_n^{[\underline{\lambda},\overline{\lambda}]}$, then there exists $S\subset C_{\varepsilon_0}$ with $m(S)\geq \frac{1}{n}$ such that $\forall c\in S$, equation system
	\begin{eqnarray}	
		G_f(\lambda,0)=G_f(\lambda,c)=0 		
	\end{eqnarray}
	has solutions on $\overline{\mathcal{E}}_0\cap \overline{\mathcal{E}}_c\cap [\underline{\lambda},\overline{\lambda}]$.
\end{proposition}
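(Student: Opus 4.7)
The plan is to replay the proof of Proposition \ref{prop7} with only cosmetic changes, exploiting the fact that the confounded beliefs are now confined by hand to the compact interval $[\underline{\lambda},\overline{\lambda}]$ rather than living wherever $\mathcal{E}_0$ happens to be. Pick any $f\in\overline{F}_n^{[\underline{\lambda},\overline{\lambda}]}$ and a sequence $\{f_m\}\subset F_n^{[\underline{\lambda},\overline{\lambda}]}$ with $\|f_m-f\|_{L_\infty}\to 0$. Each $f_m$ carries an associated stationary set $S_m\subset C_{\varepsilon_0}$ with $m(S_m)\geq\tfrac{1}{n}$, and for every $c\in S_m$ there is a confounded belief $\lambda_m^c\in\mathcal{E}_0\cap\mathcal{E}_c\cap[\underline{\lambda},\overline{\lambda}]$. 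I would set $S_{i.o.}=\bigcap_{k}\bigcup_{m\geq k}S_m$ and note, exactly as in the proof of Proposition \ref{prop7}, that $m(S_{i.o.})\geq\tfrac{1}{n}$; this $S_{i.o.}$ is the $S$ demanded by the statement.

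Next, for any fixed $c_0\in S_{i.o.}$, extract a subsequence $m_k$ along which $c_0\in S_{m_k}$ and let $\lambda_{m_k}^{c_0}$ be the corresponding confounded beliefs. All of them sit in the compact interval $[\underline{\lambda},\overline{\lambda}]$, so a further sub-subsequence $m_{k_q}$ converges to some $\Lambda_{c_0}\in[\underline{\lambda},\overline{\lambda}]\cap\overline{\mathcal{E}}_0\cap\overline{\mathcal{E}}_{c_0}$. The task is then to verify the analogue of claim \eqref{eqn29},
\begin{equation*}
\lim_{q\to\infty}\bigl|G_{f_{m_{k_q}}}(\lambda_{m_{k_q}}^{c_0},\mathfrak{c})-G_f(\Lambda_{c_0},\mathfrak{c})\bigr|=0,\qquad \mathfrak{c}\in\{0,c_0\},
\end{equation*}
after which $G_f(\Lambda_{c_0},0)=G_f(\Lambda_{c_0},c_0)=0$ with $\Lambda_{c_0}\in\overline{\mathcal{E}}_0\cap\overline{\mathcal{E}}_{c_0}\cap[\underline{\lambda},\overline{\lambda}]$, which is exactly the desired conclusion.

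The convergence itself follows the same decomposition as in equation \eqref{eqn31}, yielding an upper bound of the shape of \eqref{eqn33}. The one place where the argument in Proposition \ref{prop7} genuinely needed strong boundedness of the support is observation \eqref{eqn35}: the cutoffs must lie in some $[\underline{b},\overline{b}]\subset(0,1)$ so that the factor $\frac{1-2s}{s}$ is uniformly bounded on the intervals of integration. This is the main obstacle here, because in the partially bounded or unbounded setting $\mathcal{E}_0$ may stretch all the way to $0$ or to $+\infty$ and the cutoffs can then approach the support endpoints, where $\frac{1-2s}{s}$ blows up. However, restricting $\lambda$ to $[\underline{\lambda},\overline{\lambda}]$ with $0<\underline{\lambda}\leq\overline{\lambda}<+\infty$, together with $C_{\varepsilon_0}$ being bounded away from $\{-1,1,-v,u\}$ by Assumption \ref{assumption8}, the closed forms $m(\lambda,c)=\frac{\lambda(1+c)}{\lambda(1+c)+(u-c)}$ and $mm(\lambda,c)=\frac{\lambda(1-c)}{\lambda(1-c)+(v+c)}$ force both cutoffs into a compact sub-interval of $(0,1)$ for all $(\lambda,c)\in[\underline{\lambda},\overline{\lambda}]\times\overline{C}_{\varepsilon_0}$, which is precisely what the restriction was designed to buy. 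With uniform boundedness of $\frac{1-2s}{s}$ restored on the relevant range, the first term on the right of \eqref{eqn33} vanishes by $\|f_{m_{k_q}}-f\|_{L_\infty}\to 0$ and the other two vanish by continuity of $m,mm$ and $\lambda_{m_{k_q}}^{c_0}\to\Lambda_{c_0}$, finishing the proof.
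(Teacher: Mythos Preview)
Your proposal is correct and follows essentially the same approach as the paper, which itself says only that ``the method in proposition \ref{prop7} can be directly used to prove proposition \ref{prop12}; by restricting the solution to interval $[\underline{\lambda},\overline{\lambda}]$, we are out of the problem described in the beginning of this subsection.'' You have correctly identified that the sole obstruction in carrying over the proof of Proposition \ref{prop7} is observation \eqref{eqn35}, and that forcing $\lambda\in[\underline{\lambda},\overline{\lambda}]$ together with Assumption \ref{assumption8} restores the needed uniform bound on the cutoffs and hence on $\tfrac{1-2s}{s}$ over the relevant ranges of integration.
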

The method in proposition \ref{prop7} can be directly used to prove proposition \ref{prop12}. By restricting the solution to interval $[\underline{\lambda},\overline{\lambda}]$, we are out of the problem described in the beginning of this subsection. 

Furthermore, proposition \ref{prop8} and \ref{prop9} directly extend to partially bounded and unbounded signals. Therefore, we have completed the proof of theorem \ref{thm10} in these cases.

\section{Conclusion}
In this article we consider observational learning model with exogenous payoff shock. We find that even a very simple shock structure eliminates the phenomenon that long run learning could be confounded. The public belief must eventually aggregate overwhelming evidence against one person's private information. 

We think observational learning with a rich payoff shock might worth more exploration. Theoretically, for example, we can ask whether information aggregation becomes faster under a random payoff shock with sufficiently rich structure. This may be helpful in resolving the slowness of sequential observational learning when the private signals have thin tails, as pointed out in \cite{HMT2018} and  \cite{RV2019}. In reality, it is often easier to alter players' payoffs rather than improve the precision of players private signals. For example, in an environment that the public mistakenly herd on  effectiveness about vaccine, to launch a educational campaign might be less effective than simply subsidize the vaccine. 
	
\bibliographystyle{ecta}
\bibliography{main}

\end{document}